\def\be{\begin{eqnarray}}
\def\ee{\end{eqnarray}}
\def\b*{\begin{eqnarray*}}
\def\e*{\end{eqnarray*}}
\newtheorem{Theorem}{Theorem}[part]
\newtheorem{Proposition}{Proposition}[part]
\newtheorem{Lemma}{Lemma}[part]
\newcommand{\ba}{\begin{array}}
\newcommand{\ea}{\end{array}}
\newcommand{\ben}{\begin{equation*}}
\newcommand{\een}{\end{equation*}}
\newcommand{\bea}{\begin{eqnarray}}
 \newcommand{\eea}{\end{eqnarray}}
\newcommand{\bean}{\begin{eqnarray*}}
\newcommand{\eean}{\end{eqnarray*}}
\newcommand{\bel}{\begin{align}}
\newcommand{\eel}{\end{align}}
\newcommand{\beln}{\begin{align*}}
\newcommand{\eeln}{\end{align*}}
\newcommand{\bit}{\begin{itemize}}
\newcommand{\eit}{\end{itemize}}
\makeatletter \@addtoreset{equation}{section}
\newcommand{\pink}[1]{\textcolor[rgb]{1.00,0.00,0.50}{#1}}
\def \E{\mathbb{E}}
\def \H{\mathbb{H}}
\def \L{\mathbb{L}}
\def \M{\mathbb{M}}
\def \N{\mathbb{N}}
\def \P{\mathbb{P}}
\def \Q{\mathbb{Q}}
\def \R{\mathbb{R}}
\def \Z{\mathbb{Z}}
\def \G{\mathbb{G}}
\def \qi ={ \tilde q_i}
\def\Fc{{\cal F}}
\newcommand{\bA}{{\mathbf{A}}}
\def\={\;=\;}
\def\.{\;.}
\def\b{\beta}
\def\1{{\bf 1}}
 \def\normeL2#1{\left\|{#1}\right\|_{L^2}}
\DeclareMathOperator{\Corr}{Corr}
\newcommand{\alias}[2]{
\providecommand{#1}{}
\renewcommand{#1}{#2}
}
\alias{\P}{\mathbb{P}}
\alias{\N}{\mathcal{N}}
\alias{\L}{\mathcal{L}}
\alias{\Z}{\mathbb{Z}}
\alias{\Q}{\mathbb{Q}}
\alias{\R}{\mathbb{R}}
\alias{\C}{\mathcal{C}}
\alias{\T}{\mathbb{T}}
\alias{\E}{\mathbb{E}}
\alias{\H}{\mathcal{H}}
\alias{\B}{\mathcal{B}}
\alias{\M}{\mathcal{M}}
\alias{\G}{\mathcal{G}}
\alias{\Y}{Y_{\bullet}}
\newcommand{\nc}{\newcommand}
\nc{\cA}{{Y}} \nc{\cB}{{\mathcal B}} \nc{\cC}{{\mathcal
C}} \nc{\cD}{{\mathcal D}} \nc{\bbD}{\mathbb{D}}
\nc{\cG}{{\mathcal G}} \nc{\cF}{{\mathcal F}} \nc{\cS}{{\mathcal
S}} \nc{\cU}{{\mathcal U}} \nc{\cH}{{\mathcal H}}
\nc{\cK}{{\mathcal K}} \nc{\cM}{{\mathcal M}} \nc{\cO}{{\mathcal
O}} \nc{\cP}{{\mathcal P}} \nc{\bbE}{\mathbb{E}}
\nc{\bbEP}{\mathbb{E}_{\mathbb{P}}}\nc{\bbL}{\mathbb{L}}
\nc{\bbP}{\mathbb{P}} \nc{\bbQ}{\mathbb{Q}} \nc{\del}{\partial}
\nc{\Om}{\Omega} \nc{\om}{\omega} \nc{\bbR}{\mathbb{R}}
\nc{\bbC}{\mathbb{C}} 
\nc{\bfr}{\begin{flushright}}
\nc{\efr}{\end{flushright}} 
\nc{\dXt}{\Delta X_{t}}
\nc{\dXs}{\Delta X_{s}} \nc{\bs}{\blacksquare} \nc{\dX}{\Delta X}
\nc{\dY}{\Delta Y}
\nc{\dnkx}{\left(X(T^{n}_{k})-X(T^{n}_{k-1})\right)}
\nc{\esssup}{\mathrm{ess}\mbox{ }\mathrm{sup}}
\nc{\essinf}{\mathrm{ess}\mbox{ } \mathrm{inf}}
\nc{\dhats}{\widehat{\delta_s}}
\nc{\chf}{\mbox{$\mathbf1$}}
\nc{\ind}{\mathds{1}}
\nc{\hs}{\vspace{3mm}}
\newcommand{\mg}{\textcolor[rgb]{0.00,0.59,0.00}}
\begin{document}

\title{Emission impossible: \\ 
 Balancing  Environmental Concerns and Inflation\thanks{We thank Marc Baudry, Giorgia Callegaro, Estelle Cantillon, Anna Creti, Giorgio Ferrari, Fausto Gozzi Frank Riedel, Karin Mayr-Dorn, Aurelie Slechten and conference and seminar participates at the AMaMeF, FSR Climate Annual Conference, Spring Colloquium on Probability and Finance, Economics of Climate Change Seminar. The authors' email addresses are \href{mailto:rene.aid@dauphine.psl.eu}{\texttt{rene.aid@dauphine.psl.eu}}, 
 \href{mailto:mariarduca@gmail.com}{\texttt{mariarduca@gmail.com}}, 
 \href{mailto:sbiagini@luiss.it}{\texttt{sbiagini@luiss.it}} and 
\href{mailto:Luca.Taschini@ed.ac.uk}{\texttt{luca.taschini@ed.ac.uk}}.}}
\author{René Aïd\quad Maria Arduca\quad Sara Biagini \quad Luca Taschini}

\maketitle

\begin{abstract}
We provide a theoretical framework to examine how carbon pricing policies influence inflation and to estimate the policy-driven impact on goods prices from achieving net-zero emissions. Firms control emissions by adjusting production, abating, or purchasing permits, and these strategies determine emissions reductions that affect the consumer price index. We first examine an emissions-regulated economy, solving the market equilibrium under any dynamic allocation of allowances set by the regulator. Next, we analyze a regulator balancing emission reduction and inflation targets, identifying the optimal allocation when accounting for both environmental and inflationary concerns. By adjusting penalties for deviations from these targets, we demonstrate how regulatory priorities shape equilibrium outcomes. Under reasonable model parameterisation, even when considerable emphasis is placed on maintaining inflation at acceptable levels or grant lower priority to emissions reduction targets, the costs associated with emission deviations still exceed any savings from marginally lower inflation. Emission reduction goals should remain the primary focus for policymakers.
\end{abstract}

\vspace{0.15cm}
\noindent {\bf Keywords}: Allowance dynamic allocation; Cap-and-trade; Carbon pricing; Inflation; stochastic dynamic  optimization; Market equilibrium.

\vspace{0.15cm}
\noindent {\bf JEL subject classifications:} C61, C62, H23, L51, Q43, Q52, Q54, Q58.

\vspace{0.15cm}
\noindent{\bf Acknowledgements:} We would like to thank the Finance for Energy Markets Research Initiative, the Chair EDF–Credit Agricole CIB  \textit{Finance and Sustainable Development: Quantitative Approach,} the French ANR PEPR Math-Vives project MIRTE ANR-23-EXMA-0011, and the LUISS visiting program for their financial support.

%\tableofcontents

\clearpage
\section{Introduction}\label{sec:intro}
\setcounter{page}{1}

% Intro on carbon pricing, effect on inflation, the political reaction, and why is important to study the trade-off / balancing concerns. 
Carbon pricing has emerged as a key policy instrument in the global effort to mitigate climate change, with a growing number of jurisdictions adopting carbon taxes or cap-and-trade systems to curb greenhouse gas emissions (\cite{WorldBank2024}). 
These policies, which internalize the environmental cost of carbon-intensive activities, have been shown to be effective in reducing emissions (\cite{Martin2016}, \cite{Dechezlepretre2023}, among others). 
However, this effectiveness does not come without economic costs.
By altering relative prices, carbon pricing raises the cost of emission-intensive goods, potentially affecting the overall price level. 
The empirical literature on the inflationary impact of carbon pricing offers mixed evidence. Some studies find that the impact on inflation has been relatively weak (\cite{Metcalf2019}; \cite{Metcalf23}; \cite{Konradt23}) while others indicate more substantial inflationary effects (\cite{Kanzig23}).
Criticism of carbon pricing and promises to ``axe the tax'' have surfaced in political debates, which may in turn prompt policymakers to slow down, weaken, or reconsider the scope of climate policies.
This growing political and public scrutiny of carbon pricing has sparked discussions about the opportunity to balance inflation concerns with environmental targets to ensure that these policies remain both effective and publicly acceptable.

% What do we do and the key message / key results
In this paper, we develop a theoretical framework to explore how policymakers can trade off less ambitious carbon emission reduction targets for lower inflationary pressures arising from carbon pricing.
We consider a regulatory authority that determines the optimal allocation of emission allowances by explicitly balancing its tolerance for inflationary pressure with the need to meet stringent emission reduction goals. 
Pressing too hard on emissions reductions risks driving up inflation, whereas being overly cautious on inflation may impede progress in decarbonizing the economy. 
We derive the optimal allowance allocation by factoring in both environmental and inflationary concerns. By adjusting penalties for deviations from these targets, we demonstrate how regulatory priorities shape equilibrium outcomes.
Whether inflation concerns should influence policy as a means to secure public acceptance is ultimately an empirical question. 
Calibrating the model to data, our results indicate that emissions reduction goals should remain the primary focus. Under reasonable parameterization, even when policymakers devote considerable effort to maintaining inflation at acceptable levels or assign lower priority to emissions reduction targets, the costs tied to emission deviations still outweigh any benefits from marginally lower inflation.

We contribute to the debate on the effects of carbon pricing on inflation by presenting a theoretical framework to study how carbon pricing policies influence inflation and by applying this model to calculate the theoretical policy-driven impact on goods prices resulting from achieving net-zero emissions.
Our analysis considers an economy where emissions are limited by a regulatory authority through a cap-and-trade system.
Firms can employ three strategies to comply with the policy: adjusting production levels, undertaking costly pollution abatement, or purchasing permits. 
These strategies determine the total amount of controlled emissions reductions, which, in turn, influence changes in the consumer price index.
We begin by solving the compliance problem, deriving the firms’ optimal compliance strategies. 
Next, we solve for the market equilibrium, obtaining the equilibrium permit price given a specific emissions cap and corresponding allocation program. 
Essentially, we model the primary transmission channel for policy-driven inflation: changes in production. These production adjustments are influenced by firms’ strategic decisions between abatement and trading, which, in turn, depend on the permit price set in the cap-and-trade market. The permit price ultimately reflects the availability of permits, determined by the allocation program established by the regulator.
We then consider an alternative policy approach where the regulator decides the allocation program not only to meet emission reduction targets but also to account for policy-driven inflationary pressures. 
This introduces a deliberate trade-off in the regulatory approach, balancing environmental objectives with inflationary pressure.
To formalize this trade-off, we explicitly model it within the regulator’s objective function. The regulator aims to minimize the overall compliance costs incurred by firms, while simultaneously managing deviations from both a pre-set emission reduction target and an acceptable inflation target. 
We demonstrate how the equilibrium permit price is inherently tied to this trade-off and depends on the regulator’s priorities. If the regulator places greater emphasis on achieving stringent emission reduction targets, the allocation program will be designed to impose a tighter cap, leading to higher permit prices. Conversely, if the regulator is more focused on minimizing inflationary impacts, the allocation program may allow for more permits, thereby reducing the upward pressure on permit prices and the resulting inflation.

% mentioning the exercie illustrates the practical imoplications
We apply the theoretical model to empirical data, using an illustrative example as a unifying thread throughout the paper.
First, we illustrate how the allocation program affects controlled emissions, which are determined by firms' optimal decisions to abate emissions, adjust production, or trade permits. These decisions are directly influenced by the stringency of the allocation program and, in turn, determine the equilibrium permit price.
Second, we continue the example to calculate the theoretical inflationary effects of achieving net-zero emissions in Europe by 2050, assuming no additional consideration of inflationary impacts. This step provides a baseline for understanding the potential price-level implications of ambitious carbon reduction targets.
Third, we evaluate the trade-off between emissions reduction and inflationary impacts, focusing on the regulator’s decision-making process. By varying the penalization parameters associated with deviations from emission reduction and inflation targets, we illustrate how the regulator balances these competing objectives. 
This analysis emphasizes how the relative priority given to inflation versus emissions targets shapes equilibrium outcomes, particularly the permit price and overall policy costs.
The empirical results reveal that, under reasonable model parameterisation, even when significant weight is placed on maintaining inflation near acceptable levels or when the emphasis on achieving emissions reduction targets is substantially relaxed, emission-related costs remain a central focus of the regulator’s objectives. 
This underscores the importance of ensuring that emissions reduction goals continue to outweigh concerns over inflation deviations, reflecting the primary intent of cap-and-trade systems to drive meaningful emission reductions.

% Something about lower bound results and absence of technolgical advancement or partial equilibrium abnalisis. 
While carbon pricing policies may lead to higher prices for carbon-intensive goods, they are also expected to foster innovation in low-carbon technologies, thereby reducing the demand for carbon-intensive inputs, abatement measures, and permits. These countervailing forces could mitigate, or even counterbalance, the inflationary effects of carbon pricing.
Although our model does not explicitly account for investments in new production technologies, we acknowledge that incorporating such dynamics would likely further reduce the inflationary impact of carbon pricing, thereby reinforcing the robustness of our policy conclusions.
The economic impact of carbon pricing has also been shown to depend on the level of coverage of the initiatives. Broader coverage, encompassing a larger proportion of emissions within a jurisdiction, tends to result in greater pass-through to consumer prices (\cite{Metcalf-stock}). 
We recognize that the effects of carbon pricing on inflation can depend on policy coverage and additional factors, such as the production structure of the economy and the extent to which a country specializes in high-polluting industries and sectors. While these aspects are undeniably important, we believe that the general policy conclusions derived from our analysis remain applicable to such contexts without requiring modifications to the model.

Carbon pricing programs are often accompanied by measures to enhance public acceptability. Revenue recycling --whether through direct redistribution to households, reductions in other taxes, or investments in low-carbon infrastructure-- has been shown to mitigate some of the economic costs of carbon pricing and alleviate its regressive impacts on lower-income groups (\cite{Carattini2018} and \cite{Klenert2018}). While revenue recycling is an effective tool for addressing equity and to some extent inflation concerns, our analysis focuses on an alternative scenario where redistribution is either not feasible or is implemented on a limited scale. This could be due to institutional constraints, political challenges, or competing priorities for the allocation of carbon revenues.

%%%%%%%%%%%%%%%%%%%%%%%%%%%%%%%%%%%%%%%%%%
\section{The Model} \label{sec:model}

We model the interconnection between production activity and pollution control strategies in a dynamic set up. 
We represent an economy where firms are required to operate under a cap-and-trade system where a cap is imposed on the total amount of greenhouse gas (GHG) emissions allowed within the economy.\footnote{Hereafter, we use GHG emissions and carbon emissions interchangeably.}
Under cap-and-trade, firms generally use a combination of three strategies to comply with GHG emission limits: adjusting production levels, costly pollution emission abatement, and purchasing permits.\footnote{When the cost of purchasing permits is fixed, this scenario mirrors an economy operating under a carbon tax system. In this case, firms can choose to reduce emissions by adjusting production levels, investing in abatement, or simply paying the tax on their emissions.}
The production strategy involves firms maintaining their regular production while keeping emissions within prescribed limits by adjusting their output levels.
Alternatively, firms can increase abatement measures to reduce emissions without significantly altering their output or operations. This may involve improving energy efficiency, expanding pollution control equipment, or scaling up carbon capture and storage technologies.
The third approach involves offsetting emissions by using permits to emit. In a cap-and-trade system, firms receive emission permits and can buy additional permits if they exceed their allocation or sell excess permits if they emit less.
These strategies offer firms flexibility in complying with pollution regulations, allowing them to combine the three strategies to optimize their compliance efforts and meet GHG targets.
The selection of a compliance strategy is fundamentally driven by its relative cost-effectiveness.\footnote{Cost-effectiveness means achieving a given reduction at minimum cost or maximizing GHG emission reductions at a given cost. If the costs of abatement measures or purchasing permits exceed the costs associated with adjusting production, firms may opt to reduce output to meet pollution restrictions.}$^,$\footnote{Later we will show how the trade-off between emission reduction and inflation depends on firm responses to the policy, e.g. production reduction or abatement or purchasing permits, and how these translate into consumer price changes.}
Firms will weigh the costs associated with each option --whether adjusting production levels, abating emissions, or purchasing permits-- against the potential financial impacts of non-compliance with pollution restrictions.

\vspace{0.25cm}
We consider an economy with $N$ firms, all of which are subject to a cap-and-trade system over a compliance period of length $T$.
Each firm must cover its unabated emissions with permits or face a penalty (\cite{Can-Slech} and \cite{Carmona-Hinz}).
We work on  a filtered probability space  $(\Omega, {(\Fc_t)}_{0\leq t\leq T}, P)$  generated by an $N+1$ dimensional Brownian motion $(\tilde{W}^0, \tilde{W}^1, \ldots, \tilde{W}^N)$.
The component $\tilde{W}^0$ represents the common noise factor affecting all firms, linked to broader economic conditions, such as the general business cycle, which influences aggregate product demand across the economy. 
In contrast, the terms $\tilde W^{i}, i=1, \ldots N$ represent idiosyncratic noise specific to each firm. These components capture the individual variations in product demand and production that arise within a firm’s specific sector or business operations, reflecting the unique uncertainty faced by each firm independently of the broader economy.
We define all relevant processes, including strategies and prices, as square-integrable adapted processes. 
These processes belong to the space $\L^2(\Omega\times [0,T]),$ which we refer to simply as $\L^2$:
$$\L^2 =\left  \{ \phi \Big\vert \phi \text{ adapted and }  \E \left[\int_0^T \phi^2(t) dt\right ]< \infty\right \} $$
Also,    $(\L^2)^N = \L^2 \times \ldots \times \L^2$, $  N\geq 2 $ denots the $N$-dimensional version of $\L^2$. 
The terminal positions belong to the space of square-integrable random variables, denoted by $L^2(\Omega,\cF_T,\P)$, or simply $L^2$ for shorthand. Additionally, $L^2_+$ represents the positive cone within the space $L^2$.

%%%%%%%%%%%%%%%%%%%%%%%%%%%%%%%%%%
\paragraph{Production, abatement, and trade --} The firm's production process generates carbon emissions.
Letting $q^i_t$ represent the production rate of the $i-$th firm, the dynamics of the \textit{controlled cumulative carbon emissions} (hereafter referred to simply as controlled emissions) associated with the production of goods by firm $i-$th are described as follows:
\begin{align}
% \nonumber % Remove numbering (before each equation)
   dE^{i}_t& =(\gamma_i  q^i_t - \alpha^i_t )\, dt + \sigma_i dW^i_t, \quad \quad \gamma_i >0, \sigma_i > 0. \label{eqn:emissions}
\end{align}  
The parameters $\gamma_i$ and $\sigma_i$ represent the emission intensity per unit of production and the volatility of the firm's emissions, respectively.
While firms can influence the trend or average behavior of their emissions through adjusting production rates $q^i$ and selecting an abatement rate $\alpha^i,$ the variability of emissions, $\sigma_i,$ remains beyond their direct control and is constant.
This uncontrollable component reflects real-world factors like fluctuating product demand, operational inefficiencies, or environmental conditions that introduce variability in emissions. Incorporating this aspect into the model captures inherent uncertainties in pollution emission control. 
The trend $(\gamma_i  q^i_t - \alpha^i_t )$ is not restricted in sign, allowing firms to act as either net polluters or net cleansers, depending on whether their emissions exceed or are offset by abatement efforts. 
The shocks in firm $i$'s emissions follow a Brownian motion defined as:
$$W^i:= s_i \tilde W^0 + \sqrt{1-s_i^2}\tilde W^i,$$
where $s_i \in [-1,1]$.  
This formulation accommodates a general correlation structure between firms' shocks, with the correlation given by $\Corr (W^i,W^j) =\rho_{ij} = s_is_j$. 

\vspace{0.15cm}
The firm's decision to adjust its production rate affects output prices.
We assume the firm faces the following inverse demand function for its goods:
$$ S_i(q) := a_i - b_i q, \quad \quad a_i >0, b_i>0$$
where $a_i$ is the base price and $b_i$ is the rate at which the price decreases with each additional unit produced. 
The associated cost function is:
$$c_i(q) := \kappa_i (q-\tilde q_i ) + \frac1{2} \delta_i (q-\tilde q_i )^2, \quad \quad 0< \kappa_i < a_i , \, \delta_i>0$$
where $\tilde q_i$ represents the cost-minimizing solution in the production problem without regulation, specifically in the laissez-faire scenario, which we discuss later.

\vspace{0.15cm}
The Expression \eqref{eqn:emissions} captures the idea that, much like in real-world scenarios, firms can manage their carbon emissions by implementing abatement projects --- like enhancing energy efficiency, reducing fugitive emissions, or increasing recycling processes.
The cost of these projects is represented by the following linear-quadratic model:
$$ g_i(\alpha) := h_i \alpha + \frac1{2\eta_i} \alpha^2, \quad \quad h_i>0, \eta_i>0,$$
where the parameters $h_i$ and $\eta_i$ control the intercept and slope of $i$’s linear marginal abatement cost, respectively.\footnote{This analytical framework is widely used in emissions management and abatement cost modeling, as discussed in reviews such as \cite{Salant2015} and \cite{Can-Slech}} 
The parameter $\eta_i$ also indicates the reversibility of the abatement decision, with higher values suggesting greater flexibility and a higher likelihood that the decision can be reversed.

\vspace{0.15cm}
Under a cap-and-trade system, each firm is allocated a certain number of permits, representing their allowed cumulative emissions for the regulated period $T.$
Firms can trade these permits among themselves: if a firm’s emissions exceed its allocation, it can purchase additional permits from other firms$(\beta^i >0);$ if it emits less than its allocation, it can sell the excess permits $(\beta^i <0)$ or bank them for future use.
Banking unused permits creates a reserve for future compliance, offering flexibility and more efficient emissions management, especially when anticipating future increases in abatement costs or changes in permit allocations.
The dynamics of the permit bank are described as follows:
\begin{equation} \label{eq: emiss-bank}
dX^i_t = -dE^i_t + \beta^i_t dt + dA^i_t, \quad X^i_0 =A^i_0, 
\end{equation}
where the process $A^i$ denotes the cumulative allocation (supply) of permits from the start of the period until time $t$, while $\beta^i$ represents the rate of permit trading.
We will explore the optimal allocation program $\textbf{A}:=(A^1, \ldots , A^N$) in more detail in Section \ref{sec:regulation}, where we study the regulator's problem.   

%%%%%%%%%%%%%%%%%%%%%%%%%%%%%%%%%%

\paragraph{Laissez-faire benchmark --} \label{par: laissez} To isolate the impact of emission pollution control on a firm's decision-making, we first examine a laissez-faire environment where there are no limits on emissions. 
In this scenario, each firm operates as a monopolist in its market for the $i-$th good and determines the optimal production level, $\tilde q_i,$ without considering emissions:
\begin{align*}
\max_{q } \{S_i(q ) q  - c_i(q )\} =\max_{q } \left\{ (a_i - b_i q)q  - \kappa_i (q-\tilde q_i ) - \frac1{2} \delta_i (q-\tilde q_i)^2\right\}
\end{align*}
We use a standard quadratic cost function, which implies that reducing production --and thereby emissions-- lowers marginal production costs. 
By solving for the first-order condition, we determine the optimal production level:
$$
q^* := \frac{a_i -
\kappa_i+\delta_i \tilde q_i}{\delta_i + 2b_i}
$$
Equating this to the expected optimal quantity $\tilde q_i,$ gives us: 
\begin{align}\label{eq: q0}
\tilde q_i = \frac{a_i - \kappa_i}{2 b_i}.
\end{align}
The parameters in the production cost function are chosen to ensure that $\tilde q_i$ is strictly positive. 
Notably, without regulation, the firm's emission intensity does not influence the optimal production level.

%%%%%%%%%%%%%%%%%%%%%%%%%%%%%%%%%%
\paragraph{Emission-controlled economy --} We now examine the case where firms must comply with emission pollution control measures. 
Firms incur costs related to emissions and manage them through production adjustments, abatement efforts, and permit trading.
Each firm optimizes the balance of production production $\hat{q},$ abatement $\hat{\alpha}$, and trading $\hat \beta$ strategies to minimize total costs:  
\begin{align} \label{eq:objfirm}
\inf_{q^i,\alpha^i,\beta^i} J^i(q^i,\alpha^i,\beta^i) := \E\left[ \int_0^{T} \Big( -S_i(q^i_t) q^i_t + c_i(q^i_t) + g_i(\alpha^i_t) +  \beta^i_t P_t \Big) dt + \lambda( X^i_T )^2 \right],
\end{align}
where $P$ is the market price of emission permits, and we assume a frictionless permit market.\footnote{For models that account for permits trade friction, see \cite{quemin-baudry} and \cite{Aid-Bia}}
Each firm also faces costs related to over- or under-compliance with pollution regulations, represented by the final term in Expression (\ref{eq:objfirm}). To manage this, a penalty is applied to the cumulative net emissions position over the period $[0,T],$ accounting for emissions, abatement, permit trading, and the initial allocation of permits. 
This penalty applies to the terminal permit balance $X^i_T = E^i_T - \int^T_0 \beta_t^i dt - A_T^i,$ with $E_T^i$ and $A_T^i$ as the cumulative emissions and permit allocations. 
A penalty factor $\lambda$ is imposed for each unit of excess emissions, ensuring firms have a financial disincentive for exceeding their permit limits.\footnote{Compliance is typically enforced by assuming perfect adherence, as in \cite{Rubin}, \cite{Kol-Tas}, and papers reviewed in \cite{Salant2015}. Alternatively, compliance can be encouraged by imposing a substantial penalty for excess emissions, as in \cite{Carmon-Fehr-Hinz}, \cite{Carmona-Hinz}, and studies summarized in \cite{Can-Slech}.}$^,$\footnote{The penalty is quadratic to account for both under-compliance and excess compliance, as the latter would represent wasted resources and unnecessary costs.}

%%%%%%%%%%%%%%%%%%%%%%%%%%%%%%%%%%
%%%%%%%%%%%%%%%%%%%%%%%%%%%%%%%%%%

\section{Market equilibrium}

\subsection{Firms optimal strategies}
We now solve the cost-minimization problem for an individual firm, assuming the permit price~$P$ and firm individual allocation program $A^i$ are given.
In the next section, $P$ is determined as the market equilibrium price for a given allocation $\mathbf{A},$ and the optimal allocation program is explored in Section \ref{sec:regulation}. 

\hs
Consider the process:
\[
M^i_t:=\E_t[A^i_T],
\]
where $\E_t[A^i_T]$ represents the conditional expectation at time $t$ of the firm's cumulative permit allocation $A^i_T$ over the period $[0,T]$. 
Here, $M^i$ is the martingale generated by the cumulative allocation $A^i_T.$
Importantly, different intertemporal allocations that result in the same cumulative total allocation over $[0,T]$ will generate the same $M^i.$
In other words, if different allocation schedules provide the same total number of permits over the regulatory period, they will lead to the same $M^i.$
This means that the key factor influencing firms' compliance strategies is the total cumulative permits expected, rather than their specific timing or distribution.\footnote{This requires the 'no violation of availability' (\cite{SALANT20166}) or 'feasibility conditions' (\cite{Perino2016}) to hold. This ensures that optimal strategies remain unaffected by the specific allocation scheme. The timing of permit allocation only becomes important when firms have an incentive to stop banking permits and start borrowing from future allocations (\cite{SALANT20166}, \cite{Perino2016}, and \cite{Kol-Tas}).
When banking stops and borrowing is not allowed, the intertemporal problem breaks down. Firms then cover per-period regulated emissions through a mix of instantaneous abatement and trading, based on their relative costs.}

\vspace{0.15cm}
In the Business As Usual (BAU) scenario, we denote:
$$ \mu_i:=  \gamma_i \tilde q_i, \ \ \ \ \ \bar \mu_b:= \frac{1}{N}\sum_{i=1}^N \mu_i, $$
where $\mu_i$ is the emission drift of firm $i$ under BAU, and $\bar \mu_b$ represents the average emission trend across all $N$ firms or, equivalently, the average emissions responsibility per firm. 
We also denote:
$$  \psi_i:= \frac{\gamma_i^2}{\delta_i+2b_i}, \ \ \ \ \ \bar{\psi}:= \frac{1}{N}\sum_{i=1}^N \psi_i \   \text{ and } \ \bar H = \frac1N  \sum_{i=1}^N h_i\eta_i $$
The parameter $\bar\psi$ is the average intensity of emissions, adjusted for the combined effects of cost and market conditions on production -- we recall that $\gamma_i$ is the emission intensity per unit of production for $i-$th firm, $\delta_i$ is the sensitivity of the cost function to production changes, $b_i$ is the price sensitivity to output changes, and
$\bar H$ captures the average sensitivity of abatement costs in the economy.

\begin{Theorem}\label{Theo:single-firm-opt}
\textbf{Optimal compliance strategies} Given the cumulative allocation program $A^i$, the firm's problem in \eqref{eq:objfirm} has a solution if and only if the price process $P$ is a martingale. 
If this condition is satisfied, the optimal production and optimal abatement rates $(\hat{q}^i,\hat{\alpha}^i)$ in $(\L^2)^2  $ are unique:
\begin{align}
\label{eq: opt q}
&\hat{q}^i_t = \tilde q_i -\frac{\gamma_i}{\delta_i + 2b_i}P_t , \\
&\hat{\alpha}^i_t = \eta_i(P_t-h_i) ,
\label{eq: opt alpha}
\end{align}
An optimal trading rate is the solution to the following equation: 
\begin{eqnarray} \label{eq: opt beta} \nonumber
d \hat \beta^i_t& =&  -\left( \frac{1}{2\lambda(T-t)} + \eta_i +\psi_i\right) dP_t +
                \frac{1}{T-t}(\sigma_i dW^i_t - dM^i_t)\\
               &=& - d\hat \alpha_t^i-\left( \frac{1}{2\lambda(T-t)}   +\psi_i\right) dP_t+
                \frac{1}{T-t}(\sigma_i dW^i_t - dM^i_t)\\ \label{eq: opt beta0} 
\hat \beta_0 &  = &   \gamma_i \tilde q_i  - \hat \alpha^i_0   -\left( \frac{1}{2\lambda T} + \psi_i\right) P_0  
                - \frac{M^i_0}{T }  .      
\end{eqnarray}

\end{Theorem}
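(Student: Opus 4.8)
The plan is to exploit that the objective is a continuous, convex quadratic functional of the controls, so that optimality is equivalent to the vanishing of its Gateaux derivative in every direction. First I would eliminate the state dynamics by writing the terminal bank directly in terms of the controls: integrating \eqref{eqn:emissions} and \eqref{eq: emiss-bank} gives $X^i_T = Z^i - \int_0^T \beta^i_t\,dt$, where $Z^i := \int_0^T(\gamma_i q^i_t - \alpha^i_t)\,dt + \sigma_i W^i_T - A^i_T$ collects the part independent of $\beta^i$. Since the running integrand is quadratic in $q^i,\alpha^i$ with positive leading coefficients $b_i+\tfrac12\delta_i$ and $\tfrac1{2\eta_i}$, linear in $\beta^i$, and $X^i_T$ is affine in all three controls, $J^i$ is finite, continuous and convex on $(\L^2)^3$, and strictly convex in $(q^i,\alpha^i)$. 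As the controls range over the whole space (no constraints), a point is a global minimizer if and only if $DJ^i=0$ there; in particular $(\hat q^i,\hat\alpha^i)$ are unique when they exist, while $\hat\beta^i$ can be pinned down only through its net integral $\int_0^T\beta^i_t\,dt$.

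Next I would compute the three directional derivatives. The derivative in a trading direction $\chi\in\L^2$ equals $\int_0^T \E\big[\chi_t\,(P_t - 2\lambda\,\E_t[X^i_T])\big]\,dt$; requiring it to vanish for all $\chi$ forces $P_t = 2\lambda\,\E_t[X^i_T]$ for a.e.\ $(t,\omega)$. Because the right-hand side is a martingale, this is precisely the necessity of $P$ being a martingale (the ``only if'' direction), and it simultaneously delivers the pricing relation $P_t = 2\lambda\,\E_t[X^i_T]$, hence $X^i_T = P_T/(2\lambda)$. Substituting this relation into the derivatives in the $q^i$- and $\alpha^i$-directions and using $\tilde q_i = (a_i-\kappa_i)/(2b_i)$ yields exactly \eqref{eq: opt q}--\eqref{eq: opt alpha}.

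For the converse I would assume $P$ is a martingale and construct an admissible minimizer. The martingale property gives $\E[\int_0^T\beta^i_t P_t\,dt] = \E[P_T\int_0^T\beta^i_t\,dt]$ (integration by parts), so the cost depends on $\beta^i$ only through the net trade $B:=\int_0^T\beta^i_t\,dt$; pointwise minimization in $B$ reproduces the target $X^i_T = P_T/(2\lambda)$, i.e.\ $B = Z^i - P_T/(2\lambda)$. To realize this random target with an adapted rate I would use a bridge: with $G_t := \E_t[B]$ and $I_t:=\int_0^t\hat\beta^i_s\,ds$, set $\hat\beta^i_t := (G_t - I_t)/(T-t)$, which steers $I_t\to B$ as $t\to T$. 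A short Itô computation then gives $d\hat\beta^i_t = dG_t/(T-t)$; inserting $G_t=\E_t[B]$ together with \eqref{eq: opt q}--\eqref{eq: opt alpha}, the finite-variation parts cancel and one recovers exactly \eqref{eq: opt beta}--\eqref{eq: opt beta0}. Since this triple annihilates $DJ^i$ and $J^i$ is convex, it is a global minimizer, completing the ``if'' direction.

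The main obstacle is the trading leg. The bridge rate carries a $1/(T-t)$ singularity whose martingale part (driven by $\sigma_i\,dW^i-dM^i$) has variance of order $1/(T-t)$, so $\hat\beta^i$ need not belong to $\L^2$; what is genuinely well-defined and in $L^2$ is the net trade $B$, which is all that enters the cost. I would therefore make the admissibility class for the trading rate precise (identifying strategies with the same $B$), justify $\E[\int\beta^i P] = \E[P_T B]$ via the martingale property, and verify $\int_0^T\hat\beta^i_t\,dt = B$ including the limiting behaviour at $t=T$. The convex-analytic and algebraic steps are routine; this integrability and admissibility bookkeeping for $\beta^i$ is the delicate part.
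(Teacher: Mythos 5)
Your proposal is correct and follows essentially the same route as the paper: first-order conditions via G\^ateaux derivatives of the convex objective (with the $\beta$-direction forcing $P_t=2\lambda\,\E_t[X^i_T]$, hence martingality), strict convexity giving uniqueness of $(\hat q^i,\hat\alpha^i)$, reduction of the non-strictly-convex trading leg to the cumulative trade $B^i_T$, and a bridge rate with $d\hat\beta^i_t = d\,\E_t[B^i_T]/(T-t)$, which is exactly the paper's canonical solution. You also correctly identify the one genuine subtlety --- that this rate may fail to lie in $\L^2$ unless the volatility of $\E_t[B^i_T]$ decays fast enough near $T$ --- which is precisely the point the paper settles by invoking the representation result of Biagini--Zitkovi\'c (condition $\E[\int_0^T (\sigma^{B^i}_t)^2/(T-t)\,dt]<\infty$), so your ``bookkeeping'' step amounts to citing or reproving that result.
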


\textit{Proof:} The proof is provided in  the Appendix \ref{proof:Theo single-fim-opt}.
\\

\noindent We begin by examining the optimal production rate $\hat{q}$ in expression (\ref{eq: opt q}). 
When the permit price $P_t$ is near zero, a firm maintains its original production level since emissions restrictions impose no significant costs, allowing the firm to operate close to its laissez-faire rate without incurring substantial additional expenses.
However, as permit prices rise, firms are incentivized to reduce production.
The degree of production adjustment increases with the firm's emission intensity $\gamma_i$; the higher the emission intensity, the greater the reduction in production. 
Conversely, the adjustment in production decreases with higher values $\delta_i$ and $b_i.$
The denominator $(\delta_i + 2 b_i)$ captures the combined effect of cost and market conditions on production adjustment. 
A larger denominator reduces the quantity adjustment $\hat{q},$ reflecting the fact that both high cost sensitivity (large $\delta_i$) and high price sensitivity (large $b_i$) discourage significant production changes.\footnote{Economically, this denominator can be interpreted as a measure of ``resistance" to production adjustments under pollution regulation. When either the cost function is highly sensitive (large $\delta_i$) or the price drops rapidly with increased production (large $b_i$), the optimal response is to minimize production adjustments, leading to a production rate close to its laissez-faire solution.} 

Next, we consider the optimal abatement strategy $\hat{\alpha}$ in expression (\ref{eq: opt alpha}).
The parameter $h_i$ represents the fixed per-unit cost of abatement. Abatement only occurs when the permit price $P_t$ exceeds this cost threshold. 
The extent of abatement is directly proportional to the difference between the permit price and the abatement cost threshold, scaled by the abatement technology parameter $\eta_i$.
A higher $\eta_i$ indicates a more advanced and efficient abatement technology, leading to greater emission reductions for the same expenditure. 
Additionally, when $\eta_i$ is viewed as a measure of the reversibility of the abatement decision, higher values suggest greater flexibility in the abatement process, enhancing the firm's ability to reverse its abatement decisions if necessary.

Lastly, we examine the optimal trading rate, starting from $\hat \beta_0$ in Expression (\ref{eq: opt beta0}).
When the initial permit price is close to zero, the unconditional trading rate is approximately equal to the BAU emissions net of abatement, $(\gamma_i \tilde q_i  - \hat \alpha^i_0),$  minus the average expected per-period permit allocation, $({M^i_0}/{T}).$
As the permit price rises, firms are incentivized to increase abatement efforts and adjust their production rates, leading to a reduction in trading activity, as evidenced by the third term in Expression (\ref{eq: opt beta0}). 
The reduction in trading diminishes with higher penalty values and the length of the regulated period. 
The penalty $\lambda$ represents the opportunity cost of non-compliance, meaning that a higher penalty heightens the urgency for firms to secure permits, thus increasing trading activity. 
The trading rate at $t=0$ also depends on the length of the regulated period, with longer periods giving firms more time to offload and adjust their trading positions.
Trading increases with higher values of $\psi_i.$ Firms with higher $\psi_i$ can offset production reductions with permit trading, as the cost-effectiveness of purchasing permits may surpass the expense of further production cuts.

In Expression (\ref{eq: opt beta}), changes in the trading rate $d \hat \beta_i$ reflect adjustments in abatement, $d\hat \alpha_t^i,$ and net permit demand per unit of remaining time $\frac{1}{T-t}(\sigma_i dW^i_t - dM^i_t).$
Here, the term $(\sigma_i dW^i_t - dM^i_t)$ captures the difference between changes in uncontrolled emissions and the change in permit allocation.
Similar to the interpretation of $\hat \beta_0,$ changes in the trading rate also depend on the penalty values, the number of periods remaining, and fluctuations in the permit price.

\subsection{Market equilibrium and equilibrium price} \label{sec:equilibrium}

For a given permit allocation program $\bA,$ a market equilibrium is defined by a set of control processes $(\hat{\alpha}^i,\hat{q}^i,\hat{\beta}^i)_{i=1}^N$ and a price process $\hat P,$ such that the following condition is met:
\begin{equation}\label{eq: Ji}  J^i(\hat\alpha^i,\hat q^i,\hat\beta^i) = \inf_{(\alpha^i,q^i,\beta^i)}J^i(\alpha^i,q^i,\beta^i) \ \ \ \ \ \  i=1,\dots,N,
\end{equation}
where the equilibrium price $\hat P$ is the price at which the sum of permit trading across all firms, after considering their production adjustments, abatement strategies, and the allocation of permits, balances to zero, ie. $\sum_{i=1}^N \hat{\beta}^i_t=0 \ \ \forall t \in [0,T].$

\begin{Theorem}\label{Theo:equilibrium} \textbf{Equilibrium price} Given the cumulative allocation program $\bA$, the equilibrium price $\hat P$ is the unique solution to the Cauchy problem:
\begin{equation}
\label{eq: dynamics P equilibrium}
d\hat{P}_t = f(t)(d\bar{W}_t-d\bar{M}_t), \quad \hat{P}_0= f(0) \left[\big( \bar{H}+\bar{\mu}_b \big)T -\bar{M}_0\right],
\end{equation}
in which 
$$ f(t) := \frac{2\lambda}{1+2\lambda (\bar\eta + \bar\psi) (T-t)}, $$
where $\bar M$ and $\bar W$ represent the average (expected) allocation of permits and the average shock, respectively:
$$ \bar{M}_t :=\frac{1}{N}\sum_{i=1}^N M^i_t = \frac{1}{N}\sum_{i=1}^N \E_t[A^i_T],, \ \ \ \ \ \ \ \ \bar{W}_t :=\frac{1}{N}\sum_{i=1}^N \sigma_i W^i_t .$$

\end{Theorem}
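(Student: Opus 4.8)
The plan is to obtain the equilibrium price by aggregating the single-firm optimizers of Theorem~\ref{Theo:single-firm-opt} and imposing the market-clearing condition $\sum_{i=1}^N \hat\beta^i_t = 0$. First I would record that in any equilibrium each firm's problem is well-posed, so by Theorem~\ref{Theo:single-firm-opt} the price $\hat P$ must be a martingale and every firm's trading rate is forced to obey the SDE \eqref{eq: opt beta}--\eqref{eq: opt beta0}. Since $\sum_i \hat\beta^i_t \equiv 0$ is equivalent to the pair of conditions $\sum_i \hat\beta^i_0 = 0$ and $\sum_i d\hat\beta^i_t = 0$ for all $t$, it suffices to aggregate the dynamics and the initial datum separately.

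For the dynamics, I would sum \eqref{eq: opt beta} over $i$. The coefficient of $d\hat P_t$ aggregates to $\frac{N}{2\lambda(T-t)} + N(\bar\eta + \bar\psi)$, while the driving term becomes $\frac{1}{T-t}\sum_i(\sigma_i\,dW^i_t - dM^i_t) = \frac{N}{T-t}(d\bar W_t - d\bar M_t)$, using the definitions of $\bar W$ and $\bar M$. Setting the sum to zero and solving for $d\hat P_t$ yields the coefficient $\frac{(T-t)^{-1}}{(2\lambda(T-t))^{-1} + \bar\eta + \bar\psi}$; clearing the inner fraction collapses it exactly to $f(t)$. I would note that although each individual dynamics carries a $1/(T-t)$ singularity, the aggregated coefficient $f(t)$ is bounded on $[0,T]$ (with $f(T)=2\lambda$), so no integrability problem arises at the terminal time.

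For the initial condition, I would sum \eqref{eq: opt beta0}, substituting $\hat\alpha^i_0 = \eta_i(\hat P_0 - h_i)$ from \eqref{eq: opt alpha}. Using $\sum_i \gamma_i\tilde q_i = N\bar\mu_b$, $\sum_i \eta_i h_i = N\bar H$, $\sum_i \eta_i = N\bar\eta$, $\sum_i \psi_i = N\bar\psi$ and $\sum_i M^i_0 = N\bar M_0$, the condition $\sum_i \hat\beta^i_0 = 0$ reduces to the single linear equation $(\bar\eta + \bar\psi + \frac{1}{2\lambda T})\hat P_0 = \bar\mu_b + \bar H - \bar M_0/T$. Solving and multiplying through by $2\lambda T$ gives $\hat P_0 = f(0)[(\bar H + \bar\mu_b)T - \bar M_0]$, matching \eqref{eq: dynamics P equilibrium}.

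Finally I would close the argument in both directions. The computation establishes necessity: any equilibrium price must solve the stated Cauchy problem, which has the explicit solution $\hat P_t = \hat P_0 + \int_0^t f(s)(d\bar W_s - d\bar M_s)$ and is therefore unique. For sufficiency, I would verify that this $\hat P$ is admissible: since $f$ is deterministic and bounded while $\bar W$ and $\bar M$ are $\L^2$-martingales (the latter because each $M^i$ is a conditional expectation of $A^i_T\in L^2$), $\hat P$ lies in $\L^2$ and is a martingale, so Theorem~\ref{Theo:single-firm-opt} applies, each firm's problem is solved by the corresponding $(\hat q^i, \hat\alpha^i, \hat\beta^i)$, and by construction these trading rates sum to zero. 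The only genuinely delicate point is the bookkeeping of the equilibrium logic --- that the martingale requirement is \emph{inherited} from Theorem~\ref{Theo:single-firm-opt} rather than assumed --- together with confirming that the $1/(T-t)$ singularities in each firm's trading dynamics cancel in the aggregate; everything else is linear aggregation.
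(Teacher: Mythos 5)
Your proposal is correct and follows exactly the paper's route: the paper's own proof is the one-line remark ``summing over $i$ the optimal rates from \eqref{eq: opt beta} and imposing the market clearing condition, we get the dynamics of $\hat{P}$; the rest easily follows.'' Your aggregation of the dynamics and of the initial condition, the collapse of the $1/(T-t)$ singularity into the bounded coefficient $f(t)$, and the necessity/sufficiency bookkeeping are precisely the details the paper leaves implicit, carried out correctly.
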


\textit{Proof:}  See the Appendix  \ref{proof:Theo equilibrium}.

\vspace{0.15cm}
In the Cauchy problem (\ref{eq: dynamics P equilibrium}) the term $(d\bar{W}- d\bar{M})$ represents the difference between the change in uncontrolled emissions and the change in the average expected cumulative allocation of permits. 
This term determines the direction and dynamics of the equilibrium permit price.
When firms experience significant positive individual shocks, causing the average shock to exceed the change in the average allocation of permits $(d\bar{W} - d\bar{M} > 0)$, the permit price rises. 
This increase reflects the market's response to higher-than-expected emissions, driving up demand for permits.
Conversely, suppose there is a substantial policy change that significantly alters the expected future allocation of permits -- the difference $(d\bar{W}-d\bar{M})$ could become negative. 
In this case, the price change is negative, and the permit price decreases. 
This reflects a situation where an increase in the expected permit supply reduces the scarcity of permits, thereby lowering their price.

The permit price also depends on several other factors. The opportunity cost of non-compliance, $\lambda$. A higher penalty increases the urgency for firms to secure permits, which can drive up the permit price.
The price also depends on the average intensity of regulated goods,  $\bar{\psi} = \frac{1}{N}\sum_{i=1}^N \frac{\gamma_i^2}{\delta_i+2b_i}.$ 
This factor captures the average intensity of emissions, adjusted for the combined effects of cost and market conditions on production, reflecting how sensitive the equilibrium price is to changes in production and emissions intensity across the regulated firms. 
It also depends on the average curvature of the abatement function, $\bar \eta = \frac1N  \sum_{i=1}^N \eta_i.$ This represents the average sensitivity or reversibility of abatement across all firms. A higher $\bar \eta$ suggests that firms are either more responsive to changes in permit prices, or that they have more flexible abatement technologies, which can reduce the permit price.

The equilibrium price also depends on the distance to the end of the regulated phase. As this phase draws closer, potential changes in the permit price can have a more pronounced effect. This becomes particularly significant as firms and the market adjust their strategies in anticipation of final compliance requirements.\footnote{Historical evidence, such as the price fluctuations observed at the end of Phase I of the EU Emissions Trading System—largely due to the restriction on transferring allowances to Phase II—serves as empirical proof of this phenomenon.}

The starting level of the equilibrium price, $\hat{P}_0,$ varies depending on the extent to which BAU emissions over the entire period, $\bar{\mu}_{b} \cdot T$, exceed the average permit allocation, $\bar{M}_0.$ 
The average sensitivity of abatement costs, $\bar{H},$ further amplifies this price upward.
This initial price establishes the baseline for market evolution over the regulatory period, reflecting the initial balance (or imbalance) between expected emissions and the allocation program.

\hs
\noindent{\bf Example}: We now use an example to illustrate how the allocation program affects controlled emissions,  which are determined by firms’ optimal decisions to abate emissions, adjust production, or trade permits, and equilibrium prices. 
Later, we continue with this example and calculate the theoretical policy-driven impact on goods prices resulting from achieving net-zero emissions. 
Finally, we use the same example and evaluate the trade-off between emission reduction and price impact in the pursuit of net-zero targets.

\vspace{0.15cm}
\noindent First, recall that the dynamics of controlled emissions are described by relation~\eqref{eqn:emissions}, where $\hat q^i_t$ and $\hat\alpha^i_t$ are defined by~\eqref{eq: opt q}. 
By taking expectations and averaging over all firms, the per-firm expected controlled emissions at the terminal time $T$ can be expressed as:
\begin{align*}
\frac1N \E[E_T] = T (\bar \mu_b - \bar\psi P_0 - \bar \alpha_0)
\end{align*}
where $\bar \alpha_0 = \frac1N \sum_{i=1}^N \hat\alpha^i_0$ represents the average abatement effort across firms at $t=0$.
Since both abatement rate and production rate are martingales, we can re-write the per-period, per-firm average emissions at the terminal time $T$ as:
\begin{align*}
\frac1{N} \frac1{T} \E[E_T]  =  \bar\mu_b + \bar H - (\bar\eta + \bar\psi) P_0 
\end{align*}
which shows that controlled emissions primarily depend on BAU emissions $(\bar\mu_b)$ and the average sensitivity of abatement costs $(\bar H)$. 
Additionally, they depend on the sensitivity of permit prices to the combined effect of abatement costs and emissions intensity $(\bar\phi := \bar\eta + \bar\psi)$. 
This reflects the average flexibility of abatement measures across all firms, as well as the average intensity of emissions, adjusted for production costs and market conditions.

By substituting the expression for the initial permit price $P_0$, we obtain: 
\begin{align*}
\frac1{NT} \E[E_T] = \frac{(\bar\mu_b + \bar H)}{1 + 2\lambda(\bar\eta + \bar\psi) T} +
 \frac{2\lambda (\bar\eta + \bar\psi) T}{1 + 2\lambda(\bar\eta + \bar\psi) T} \bar M_0.
\end{align*}
The per-period, per-firm average controlled emissions essentially depend on three factors: BAU emissions, the average sensitivity to abatement costs, and the expected allocation of permits $(M_0)$. 
As the second coefficient describing the sensitivity of controlled emissions to the allocation program approaches 1 --with either an increase in penalty levels or a longer regulatory period-- controlled emissions become more directly tied to the allocation program.
This is the intended outcome in cap-and-trade systems, where the stringency of the regulatory policy is closely tied to the allocation program, ensuring that cap act as the primary driver of controlled emissions and play a central role in achieving emissions reductions.

\vspace{0.15cm}
To illustrate the sensitivity of controlled emissions to the expected allocation of permits and the corresponding cap, we consider Europe's net-zero emissions goal for 2050. 
Achieving this target requires reducing the baseline emission rate of production activities covered by the cap-and-trade progrma to zero.
Presently, this baseline stands at approximately $\bar\mu_b \approx 1.5\,10^9$~tonCO$_2$ per year, as per data reported by the European Environment Agency.
Using estimates from \cite{Metcalf23}, we evaluate the parameter $\bar\phi := \bar\eta + \bar\psi.$
Their analysis suggests that a \$40/ton carbon tax leads to a cumulative 5\% reduction in emissions. 
Applying this reduction to the total emissions covered under the European cap-and-trade program gives a per-dollar impact of $\bar\phi = 0.05 \times 1.5\,10^9 \div 40$ $=1.875\,10^{6}$~tonCO$_2$/(\euro$\cdot$y).
Next, we estimate the penalty parameter $\lambda$ by setting a maximum allowable discrepancy of 0.01 Gt in meeting the emission reduction goal. 
Based on this condition, and following \cite{Aid-Bia}, we derive $\lambda$ $=1.25\,10^{-6}$~\euro/ton$^2.$
With these parameters, we calculate the sensitivity of controlled emissions to the allocation program $M_0$:
\begin{align*}
\frac{2\lambda (\bar\eta + \bar\psi) T}{1 + 2\lambda(\bar\eta + \bar\psi) T} = \frac{1}{1 + \frac{1}{4.7 T}}.
\end{align*}
This expression shows that as the regulatory period extends beyond $T > 2,$ the allocation program becomes increasingly influential, reinforcing its role as the primary driver of controlled emissions in the cap-and-trade system.

We now calculate the sensitivity of the initial permit price $P_0$ to the same model parameters:
\begin{align*}
\frac{2\lambda}{1 + 2\lambda(\bar\eta + \bar\psi) T} = \frac{2.5\,10^{-6}}{1 + 4.7 T}.
\end{align*}
Setting the regulatory horizon at 10 years, this corresponds to a permit price sensitivity of approximately $5\, 10^{-8}$ (\euro/ton)/ton. This means that a change in the allocation program by 100 million tons would shift the carbon price by $5$~\euro/ton.
Shortening the regulatory period to 5 years doubles this effect, causing a price movement of $10$~\euro/ton. 
Similarly, if $T$ represents the prevailing investment horizon of market participants, as described in \cite{quemin-baudry}, a more myopic market outlook --reducing the horizon from 10 to 5 years-- would likewise double the impact, leading to a $10$~\euro/ton price shift.

\section{Inflationary impact of carbon pricing regulation}\label{ssec:inflation-impact}

Recent debates have focused on the potential inflationary effects of carbon pricing policies on the prices of emission-intensive goods and services.
The literature offers mixed evidence on the inflationary effects of carbon pricing. Some empirical studies suggest that the impact on inflation in Europe and Canada has been relatively weak (\cite{Metcalf2019}; \cite{Metcalf23}; \cite{Konradt23}), while other research, particularly on the European Union Emissions Trading System (EU ETS), indicates larger inflationary effects (\cite{Kanzig23}). 
These differences can be attributed to variations in sectoral composition and historically low carbon prices.\footnote{The EU ETS targets high-emitting industries, which are more likely to pass on emission costs to consumers (\cite{fabra-reguant}). In contrast, national carbon taxes examined by \cite{Metcalf2019}, \cite{Metcalf23}, and \cite{Konradt23} often focus on sectors like transportation, which cover fewer emissions. In these sectors, higher costs may be absorbed by firms or avoided by households through fuel or transport substitution, possibly explaining the differing inflationary impacts across studies. Additionally, historical carbon taxes and permit prices in cap-and-trade systems have generally been low and less ambitious compared to upcoming goals like the EU’s ``Fit-for-55" targets for 2030. Simulation studies indicate that a significant increase in permit prices  -- potentially reaching 150 euros per ton of carbon by 2030 -- could raise inflation by 0.2 to 0.4 percentage points annually until 2030. Consequently, past data on carbon prices may not reliably predict future inflationary effects.}

\vspace{0.15cm}
We use our model to establish a simple theoretical foundation for assessing the inflationary impact of carbon pricing, particularly in scenarios involving more ambitious emission reduction targets and the resulting higher permit prices.
Specifically, we examine the impact of a cap-and-trade system on the Consumer Price Index (CPI).\footnote{The model focuses on real prices rather than nominal prices. In the context of carbon pricing, the primary concern is the adjustment of relative prices due to emission costs, which influence real production and consumption decisions.}
By comparing the policy-driven changes in CPI with the CPI under a BAU scenario, we quantify the price impact of carbon pricing and calculate the resulting theoretical inflationary pressure.
In this model, the CPI comprises $N$ distinct regulated goods, each produced by a different sector $i$, such as energy, steel, cement, and others. 
Thus, each $i$ represents a regulated sector producing a single good.\footnote{The scenario where only a portion of the CPI is affected by carbon pricing, or where multiple regulated firms produce each good, can easily be extended from the case described here.}
For a given permit allocation $\mathbf A$ and equilibrium permit price $\hat P$, we define the CPI as a weighted sum of the prices of the $N$ goods under BAU, $\hat \pi_t,$ and policy-driven, ${\pi}_b$:
\begin{align} \label{def: consumers basket}
\hat \pi_t := \sum_{i=1}^N w_i S_i(\hat q^i_t), \quad {\pi}_b := \sum_{i=1}^N w_i S_i(\tilde q_i), \quad  w_i \in (0, 1), \quad \sum_{i=1}^N w_i =1,
\end{align}
where $\tilde q_i$ represents the optimal production without regulation, while $\hat q^i_t$ denotes the optimal production under the cap-and-trade system. The latter is obtained by substituting the equilibrium price from \eqref{eq: dynamics P equilibrium} into the production adjustment Equation \eqref{eq: opt q}.

\vspace{0.25cm}
Typically, the weights $w_i $ reflect the relative importance of each good in the CPI. Thus these weights are determined by the proportion of each good in the overall consumption basket, ensuring that goods with higher consumption shares have a greater influence on the CPI. 
Beyond inflation, the impact of carbon pricing regulation can be assessed by considering alternative weighting criteria that reflect household income differences.\footnote{CPI does not account for differences in household income levels, but accounting for these differences is important. According to a recent report \cite{IMF2022} report, increases in energy prices disproportionately affect low-income relative to high-income households. For example, in the UK, energy price increases contribute to a 6\% rise in the cost of living for the wealthiest 20\% of households, but a 15\% rise for the poorest 20\%. This disparity is even more pronounced in Eastern European countries, where energy inflation exposure for the lowest-income households can reach up to 25\%.} 
For example, by adjusting the weights $w_i$ in the CPI to reflect the relative spending of different income brackets, we can use this simple theoretical framework to analyze the policy-driven impact on an income-stratified CPI or cost-of-living indices.
These indices capture the effect of carbon pricing by incorporating both price changes and shifts in consumption patterns based on income, without requiring any changes to the underlying formulas presented in this section.
Adjustment of the weights would likely capture the broader social impact of carbon pricing and provide a more nuanced measure of its distributional effects.

\vspace{0.25cm}
The policy-driven CPI adjustment is:
\begin{align} \label{eq:polindCPI} 
{\hat \pi_T - \pi_b} & = \sum_{i=1}^N w_i (S_i(\hat{q}^i_T)-S_i(\tilde q_i)) 
=  \underbrace{\sum_{i=1}^N \frac{b_i}{\delta_i+2b_i} \gamma_i w_i}_{\bar \omega} \hat P_T
\end{align}

\vspace{0.15cm}
The \textit{expected average inflation rate} over the period $T,$ representing the expected per-period inflation rate $I,$ is given by:
\begin{align}  \label{eq:inflation} 
I := \frac{1}{T} \frac{\E[\hat \pi_T] - \pi_b}{ \pi_b} = \frac{1}{T} \frac{\bar \omega}{\pi_b} \hat P_0
\end{align}

\vspace{0.15cm}
\noindent \textbf{Example continued:}
We continue the previous example to calculate the theoretical policy-driven impact on goods prices associated with achieving net-zero emissions by a specified future year $T.$ 
This calculation establishes a baseline for understanding the potential price-level implications of ambitious carbon reduction targets.
This objective necessitates reducing the average emissions drift to zero:
\begin{align} \nonumber
\E[\bar \mu_T] = 0, \quad 
\bar \mu_T:= \frac{1}{N} \sum_{i=1}^{N} \mu^i_T, \quad
\mu^i_T:= \gamma_i \hat q^i_T - \hat \alpha^i_T.
\end{align}
This condition implies that the average emissions drift across all firms is neutralized by the end of the regulatory period.
To meet this target, if some firms exceed their emission limits, others will need to drastically reduce their emissions, potentially capturing more emissions than they produce and acting as net sequesters, thereby ensuring that overall emissions balance out to achieve the net-zero goal.
By substituting the optimal production adjustments and emissions abatement levels required for each firm to meet the net-zero target, $(\hat q^i_T, \hat \alpha^i_T)$, we can calculate the initial equilibrium price and the average inflation rate induced per year by this ambitious goal:
\begin{align*} 
\E[\hat P_T] = \hat P_0 = \frac{\bar \mu_b + \bar{H}}{\bar\eta + \bar\psi}, \quad
I = \frac{1}{\pi_b } \frac1T \frac{\bar\mu_b + \bar H}{\bar\eta + \bar\psi} \bar \omega.
\end{align*}

We first calculate the initial permit price for the scenario where the net-zero emissions goal is applied to the European cap-and-trade program by 2050. Achieving this target requires reducing the baseline emission rate, $\bar\mu_b \approx 1.5\,10^9$~tonCO$_2$ per year, to zero.
Using the parameter values from the previous example, we set $\bar\phi =1.875\,10^{6}$~tonCO$_2$/(\euro $\cdot$ y).
Additionally, we select $\bar H \approx \bar \eta h = 6 \,10^6 \times 25 = 1.5 \,10^8$, based on illustrative estimates from \cite{Gollier2024} and the MIT Emissions Prediction and Policy Analysis model \cite{Morris2012}.
Using these parameters, we calculate an initial carbon price of $P_0 = 880$~\euro/ton. 
Notably, this estimate aligns closely with carbon price projections utilized by public authorities, such as those outlined in France Stratégie's 2019 report. The report estimates a carbon value starting at 250\euro/ton in 2025 and rising to 775~\euro/ton by 2050, intended to guide decision-making processes related to evaluating avoided emissions.\footnote{See Alain Quinet, {\em The Value for Climate Action}, France Stratégie, 2019, Chapter 4, Section 3.2, Figure 42, which presents a carbon value starting at 250\euro/ton in 2025 and ending at 775~\euro/ton in 2050.}
Combining this price estimate with findings in \cite{Konradt24} that a carbon price of 40~\euro/ton leads to a total excess headline inflation of 3\% over 10 years ---equivalent to an average annual inflationary impact of 0.3\% per year--- suggests that achieving net-zero emissions by 2050 with a carbon price of 880~\euro/ton could result in an annual excess inflation rate of approximately 6.6\%. This estimation leads to consider $\bar\omega = 0.3/40 = 7.5\,10^{-3}$ of percentage of inflation per \euro/tonCO$_2$ of carbon price.

\section{Balancing environmental and inflation concerns: navigating policy trade-offs}
\label{sec:regulation}

In a period of high inflation, balancing inflation concerns with environmental targets can become essential for making carbon pricing policies more acceptable.\footnote{Many commentators have raised concerns that more ambitious climate policies, which could significantly increase prices, risk triggering Yellow Jackets-style backlash and broader political resistance. See, for example, discussions in Politico (December 16, 2024), Bloomberg (June 10, 2024), and The Financial Times (March 28, 2024).}
Imagine if an authority could decide how much inflationary pressure they are willing to tolerate to achieve ambitious carbon reduction goals. 
This approach would involve a careful trade-off: pressing too hard on emissions reductions could drive substantial inflation, while excessive caution on inflation could hinder progress toward pollution control targets. 
While in practice, we live in a world with an independent central bank focusing on inflation and a separate regulator for carbon pricing, here we consider a single carbon pricing regulator who accounts for the political acceptability of the policy by considering its inflationary impacts alongside emission reduction goals.

We explicitly model this policy trade-off in the regulator’s objective function, where the goal of minimizing firms' compliance costs is balanced by managing deviations from both a pre-set emission reduction target and an acceptable inflation target.
This approach can be represented as a weighted function that penalizes deviations from these dual targets.
The new regulator's objective is:
\begin{align}
\label{eq:regprob}
&\inf_{\bA} J^{\rm R}(\bA) := \sum_{i=1}^N  J^i(\hat \alpha^i,\hat q^i,  \hat \beta^i)  +   \E\left[\ell( \bar \mu_T - \theta)\right]  +   \E\left[\varphi\left ( \frac{1}{T}\frac{\hat \pi_T - \pi_b}{\pi_b} -  \nu\right)\right],
\end{align}
where $\ell$ is a real function representing the convex penalization of the deviation between the policy's projected average emissions drift and the target average emission rate $\theta.$ This function captures the regulator's goal of ensuring that the mean emissions reduction efforts are on track to meet the environmental targets. A larger deviation from $\theta$ results in a higher penalty, reflecting the importance of staying close to the desired emission path.
$\varphi$ is a real function representing the convex penalization of the policy-induced inflation compared to an acceptable inflation rate $\nu$ (e.g. $\nu = 2$\%/year). This function embodies the regulator's concern with maintaining reasonable inflation and policy acceptability while implementing carbon pricing policies. If the inflation caused by the policy exceeds the acceptable threshold $\nu$, the penalty increases, signaling the need to mitigate the inflationary impact.

\vspace{0.15cm}
We solve the regulator’s problem in a general setting, where the penalization functions $\ell$ and $\varphi$ can take any convex form. 
After establishing the general solution, we illustrate a specific case of quadratic penalization through a numerical example.
To simplify the notation for all optimal quantities that solve the regulator's problem, we adopt a streamlined approach: we drop the hat from the variables and denote the optimal solutions with a superscript $*.$ 
For example, instead of writing $\hat P^*$ for the optimal equilibrium price, we will write $P^*$. This notation will be applied consistently across all relevant variables, making the expressions easier to follow throughout the analysis.
 
\begin{Theorem}\label{Theo:ell-phi}
\textbf{Solution to the regulator’s problem \eqref{eq:regprob}.} The allocation process $\bA^* =(A^{1,*}, \ldots, $ $ A^{N,*})$  is optimal if and only if it satisfies
\begin{align} \label{eq: opt allocation}
d\bar M^*_t&=d \bar W_t  \quad \quad \bar M^*_0 = T\bar{H} -
      P^* \left(\frac{1}{2\lambda} +  \bar{\phi} T\right )
\end{align} 
where $\bar M^\ast_t = \E_t[\bar Y^\ast_T]$,    $\bar{\phi}$ is the average of the emission drift resulting from BAU  production, and $P^* = P_t^* $ is the unique, constant equilibrium price induced by such allocation. The price is found as the unique minimum of the real  function $s$: 
\begin{align*} 
s(x) & = x^2   \frac{NT}{2} \left( \bar{\phi} +\frac{1}{2\lambda T}  \right)    + \ell\left( \bar{\mu}_b-\bar{\phi}\, x + \bar H-\theta \right)
+     \varphi \left(  \frac{\bar \omega x}{T \pi_b}-\nu\right ). 
\end{align*}

\noindent The optimal production and optimal abatement rates are:
\begin{align*}
 {\alpha}^{i,\ast}_t = \eta_i(P^* -h_i)  \ \quad  {q}_t^{i,\ast}  = \tilde q_i   -\frac{\psi_i}{\gamma_i}P^\ast \  \quad \ \forall t \in [0,T].
\end{align*}

The optimal average emission rate at the terminal time, the policy-induced CPI, and the corresponding per-period rate of inflation are:
\begin{align*}
{\bar\mu}^\ast_T = \bar{\mu}_b +\bar H - \bar{\phi}P^\ast,\quad \quad
\hat \pi_T = \pi_b + \bar \omega P^*, \quad \quad 
\hat \imath^\ast_T =  \frac{\bar{\omega}}{T\pi_b} P^\ast.
\end{align*}

\end{Theorem}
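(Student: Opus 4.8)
The plan is to collapse the regulator's functional $J^{\rm R}(\bA)$ into a functional of the scalar equilibrium price path $\hat P$ alone, and then exploit convexity twice. First I would substitute the firm optima from Theorem~\ref{Theo:single-firm-opt} into $\sum_i J^i$. Market clearing $\sum_i\hat\beta^i_t\equiv 0$ annihilates the trading term $\sum_i\hat\beta^i_t\hat P_t$, so only the production, abatement and terminal-penalty pieces survive. Expanding $-S_i(\hat q^i)\hat q^i+c_i(\hat q^i)$ around the laissez-faire point $\tilde q_i$, the first-order term vanishes (since $\tilde q_i$ is the unregulated optimizer, \eqref{eq: q0}), leaving the purely quadratic $C_i+\tfrac12\psi_i\hat P_t^2$; similarly $g_i(\hat\alpha^i)$ reduces to $\tfrac{\eta_i}{2}\hat P_t^2$ up to a constant, the linear-in-$\hat P$ contributions cancelling. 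Using the firm's terminal first-order condition $\hat P_T=2\lambda X^i_T$ (from the proof of Theorem~\ref{Theo:single-firm-opt}), the penalty collapses to $\sum_i\lambda(X^i_T)^2=\tfrac{N}{4\lambda}\hat P_T^2$. Altogether
\begin{align*}
J^{\rm R}(\bA)=\mathrm{const}+\frac{N\bar\phi}{2}\,\E\!\int_0^T \hat P_t^2\,dt+\E\big[G(\hat P_T)\big],
\end{align*}
where $G(p):=\tfrac{N}{4\lambda}p^2+\ell(\bar\mu_b+\bar H-\bar\phi p-\theta)+\varphi(\tfrac{\bar\omega}{T\pi_b}p-\nu)$ is convex (indeed strictly, via the quadratic term) because $\ell,\varphi$ are convex and composed with affine maps, using the terminal identities $\bar\mu_T=\bar\mu_b+\bar H-\bar\phi\hat P_T$ and $\hat\pi_T=\pi_b+\bar\omega\hat P_T$.

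Next I would invoke Theorem~\ref{Theo:equilibrium}: $\hat P$ is a continuous martingale with $d\hat P_t=f(t)(d\bar W_t-d\bar M_t)$ and deterministic start $\hat P_0$, and the regulator controls $\hat P_0$ through $\bar M_0$ and the martingale part through $d\bar M_t$, independently. Two applications of Jensen then bound the two price-dependent blocks from below in terms of $\hat P_0$: for the martingale, $\E[\hat P_t^2]=\hat P_0^2+\E[(\hat P_t-\hat P_0)^2]\ge \hat P_0^2$, so $\frac{N\bar\phi}{2}\E\int_0^T\hat P_t^2\,dt\ge \frac{N\bar\phi T}{2}\hat P_0^2$; and $\E[G(\hat P_T)]\ge G(\hat P_0)$. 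Adding these and checking $\frac{N\bar\phi T}{2}x^2+G(x)=s(x)$ gives $J^{\rm R}(\bA)\ge \mathrm{const}+s(\hat P_0)\ge \mathrm{const}+\min_x s(x)$. Both inequalities are equalities simultaneously exactly when $\hat P_t\equiv\hat P_0$ (the strict convexity of $G$ forcing $\hat P_T$ a.s.\ constant, and the variance bound forcing $\hat P_t=\hat P_0$ for a.e.\ $t$), i.e.\ when the price is constant.

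Finally I would argue attainability and read off the optimality conditions. A constant price at any level $x$ is achievable by choosing $d\bar M_t=d\bar W_t$ (so $d\hat P_t=0$, as $f>0$) together with $\bar M_0$ fixing $\hat P_0=x$ via the $\hat P_0$-formula of Theorem~\ref{Theo:equilibrium}; such an allocation exists in $L^2$ (e.g.\ $A^i_T=\bar M_0+\sigma_iW^i_T$). Hence the infimum equals $\mathrm{const}+\min_x s(x)$, attained precisely at the constant price $P^\ast:=\arg\min s$, which is unique by strict convexity of $s$. This yields the stated conditions $d\bar M^\ast_t=d\bar W_t$ and the value of $\bar M^\ast_0$ obtained by solving $\hat P_0=P^\ast$. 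The remaining formulas for $q^{i,\ast}_t,\alpha^{i,\ast}_t,\bar\mu^\ast_T,\hat\pi_T,\hat\imath^\ast_T$ follow by substituting the constant $P^\ast$ into \eqref{eq: opt q}--\eqref{eq: opt alpha} and the definitions, noting $\tfrac{\gamma_i}{\delta_i+2b_i}=\tfrac{\psi_i}{\gamma_i}$.

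The main obstacle is the reduction in the first paragraph: one must verify that the firm-aggregate cost, after market clearing and substitution of the optima, is genuinely convex in the price with all linear terms cancelling, and in particular justify the pathwise terminal relation $\hat P_T=2\lambda X^i_T$ that renders the penalty a clean quadratic in $\hat P_T$. This is what decouples the problem into the scalar minimization of $s$, after which the two Jensen steps and the attainability construction are routine.
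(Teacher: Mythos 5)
Your proposal follows the paper's own proof essentially step for step: substituting the firm optima and using market clearing to reduce $J^{\rm R}$ to a convex functional of the equilibrium price alone, applying martingality plus Jensen to lower-bound it by $s(\hat P_0)$ up to nonnegative variance (quadratic-variation) terms, characterizing optimality by a constant price $P^*=\arg\min s$ (unique by the paper's Lemma on $x\mapsto ax^2+g(x)$), and attaining the bound via $d\bar M^*_t=d\bar W_t$ with the same example allocation $A^{i,*}_t=\bar M^*_0+\sigma_i W^i_t$. The only blemish is a harmless sign slip: the terminal first-order condition reads $\hat P_T=-2\lambda X^i_T$ rather than $+2\lambda X^i_T$, which is immaterial since only $(X^i_T)^2$ enters the penalty.
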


\textit{Proof:}  See the Appendix  \ref{app: Section 5}.

\hs
The expressions for ${\bar\mu}^\ast_T$ and  $\hat \pi_T$ clearly illustrate the trade-off between emission reduction and price impact. 
As the permit price $P^*$ rises, its downward pressure on emissions trend ${\bar\mu}^\ast_T$ increases, achieving greater emission reductions. 
However, this also drives the consumer price index $\hat \pi_T$ higher, ultimately leading to greater policy-driven inflationary pressure from stringent emission controls.
This trade-off depends on the regulator's priorities. If the regulator places more weight on emission reductions over inflation control, the optimal policy will focus on further reducing emissions, as observable in the expression $s(x).$ This results in a higher permit price, pushing ${\bar\mu}^\ast_T$ closer to the target average emission rate but at the expense of raising $\hat \pi_T$ and $\hat I^\ast_T,$ which reflect the greater price and inflationary impact.

\noindent \textbf{Example continued:}
We continue the previous example to illustrate the trade-off between emissions reduction and policy-driven price impacts, focusing on achieving net-zero emissions in Europe by 2050 with a regulatory horizon of $T=25$ years. 
By adjusting the penalization parameters for deviations from emission reduction and inflation targets, we check empirically how the regulator balances these competing objectives. This analysis highlights how the relative weight assigned to inflation and emissions targets influences equilibrium outcomes, including the permit price and overall policy costs. 
First, the average emission rate at the terminal time is given by:
\begin{align} % \label{eq:muT} 
\E[\bar \mu_T] &= \bar \mu_b + \bar H -  \bar{\phi}  \hat P_0,  
\end{align}
from Expression \eqref{eq:inflation}, expected average inflation rate over the period $(0,T)$ is
\begin{align*} % \label{eq:IT} 
    I  &= \frac{\E[\hat \pi_T] - \pi_b}{T \pi_b} =   \bar \omega \frac{\hat P_0}{T \pi_b},  
\end{align*}
the optimal permit price is
\begin{align} \label{eq:Popt}
{P^*} = {\frac{ \bar{\phi}(\bar\mu_b + \bar H -\theta)y_\mu + y_\pi \frac{\bar \omega \nu}{T \pi_b}}{  NT( \frac12\bar{\phi}+\frac{1}{4\lambda T}) + \bar{\phi}^2 y_\mu +   \frac{\bar \omega^2}{T^2 \pi_b^2}  y_\pi}}, 
\end{align}
where $y_\pi$ is the penalty parameter for deviations from the inflation target, and $y_\mu$ is is the penalty parameter for deviations from the emission reduction target.
The regulator's objective cost function depends on the permit price $x$ 
\begin{align} \label{eq:cost}
s(x) & =  s(0) 
+  N T \left (\frac12{\bar{\phi}} + \frac{1}{4\lambda T}\right)  x^2 + y_\mu \left( \bar{\mu}_b- \bar{\phi}\, x +\bar H -\theta  \right)^2+ y_\pi \left(\frac{\bar \omega x}{T\pi_b}-\nu \right)^2. 
\end{align}
We summarize all parameters estimated in the preceding sections of the example in Table~\ref{tab:params}. 
\begin{table}[h]
\centering
\begin{tabular}{c c c c c c c c } 
T & $\bar\mu_b$ & $\bar H$ & $\bar{\phi}$ & $\bar\omega$ & $y_\pi$ & $y_\mu$ & $\lambda$      \\ \hline
%& &  &  & &  &  &   \\ 
y &Gt/y  & Gt/y & Mt/(\euro$\cdot$y) &  \%/(\euro/t)   & G\euro/(\%/y)$^2$  &  \euro/(tCO$_2$$\cdot$y)$^2$     & \euro/ton$^2$   \\ \hline
% &&  &  & &  &  &   \\ \hline
%& &  &  & &  &  &   \\ 
25 & 1.5 &  0.15 & 1.875 &  0.0075 & 750 & $10^{-3}$  &  $1.25\,10^{-6}$   \\ \hline
% &&  &  & &  &  &   \\ \hline
\end{tabular}
\caption{Parameter values with their corresponding units.}
\label{tab:params}
\end{table}

To calibrate the penalty parameter for deviations from the inflation target ($y_\pi$) we look for reasonable value for $y_\pi$ from economic research that study GDP-inflation elasticity.
Drawing on a recent survey of OECD countries (\cite{Kirsanli24}),  use an average estimate indicating that an annual excess inflation rate of 1\% results in a GDP contraction of 0.1\%.\footnote{This magnitude aligns with typical values generated by macroeconomic models used by central banks to estimate the impact of inflation shocks on the economy. For example, see \cite{lemoine2018fr}, non-technical summary, p.~iii.} 
Based on this relationship, we assume that the marginal cost of excess inflation, represented in our model by $2 y_\pi (i - \nu),$ equates to $10^{-3}$ of the EU's GDP. With the EU's GDP estimated at approximately \euro$1.5 \times 10^{13}$, we calculate $y_\pi$ as:
$y_\pi = 0.5 \times 10^{-3} \times 1.5\,10^{13}$ $=750$~G\euro/(\%/y)$^2$. 
Now, regarding the penalty parameter $y_\mu$ of the emission rate target, we observe that the penalisation is concerned about the square of the emission rate. Thus, even a deviation of $100$ millions tons of emissions (ie close to 10\% of the baseline emission rate of $10^9$ tons per year), would yield a cost of $y_\mu 10^{16}$. A value of $y_\mu =10^{-3}$ would basically resolves in a social cost of $10^{13}$ which is of the order of magnitude of the European GDP. For an emission rate deviation of $10$ millions tons (1\% of the baseline), the cost would be $10^{11}$, ie 1\% of the European GDP. Thus, considering $y_\mu$ to be of the order of $10^{-3}$ is the largest possible value we can take for a reasonable penalisation of the emission target deviation. Besides, because of the uncertainty surrounding the values of the penalty parameters $y_\pi$ and $y_\mu$, we develop a sensitive analysis around these reference values. For the reference value of the parameters given in Table~\ref{tab:params}, the optimal price of emissions $P^\ast$ given by relation~\eqref{eq:Popt} is $874$~\euro/tonCO2, ie very close to the price needed to achieve net-zero in 25 years.

\begin{figure}[hbt!]
\begin{center}
\begin{tabular}{c c}
(a) & (b) \\
\includegraphics[width=0.45\textwidth]{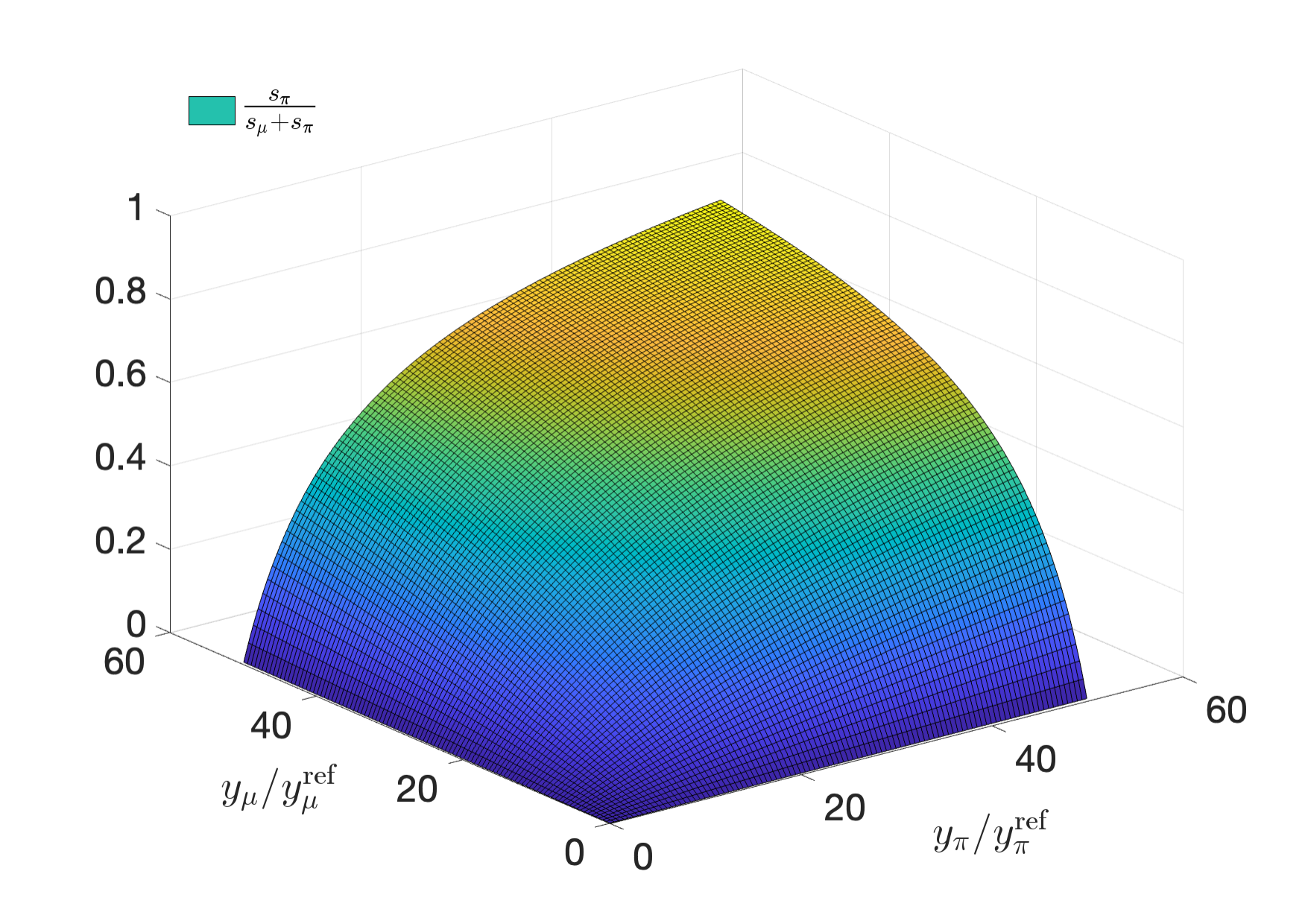}&
\includegraphics[width=0.45\textwidth]{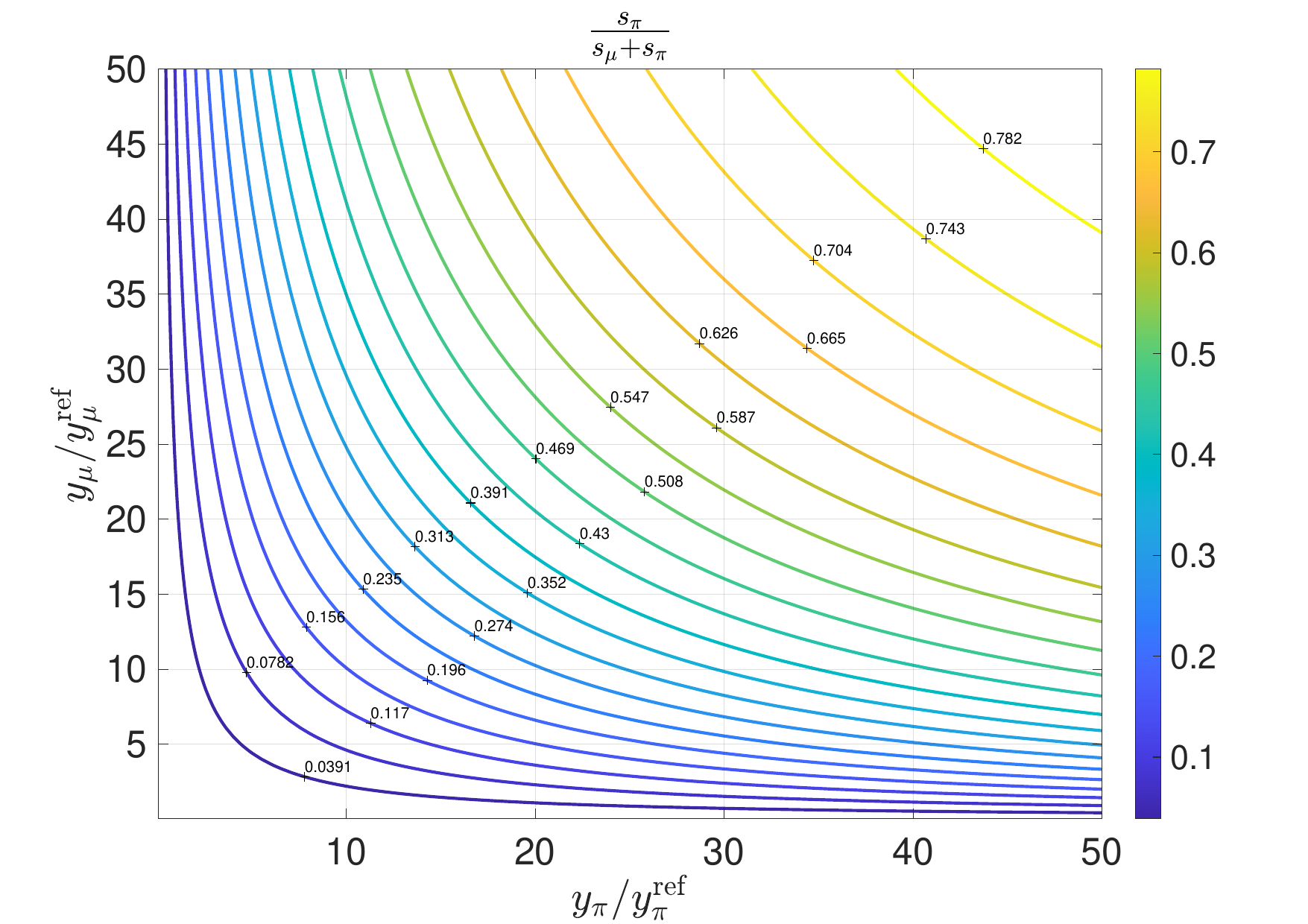}\\
(c) & (d) \\
\includegraphics[width=0.45\textwidth]{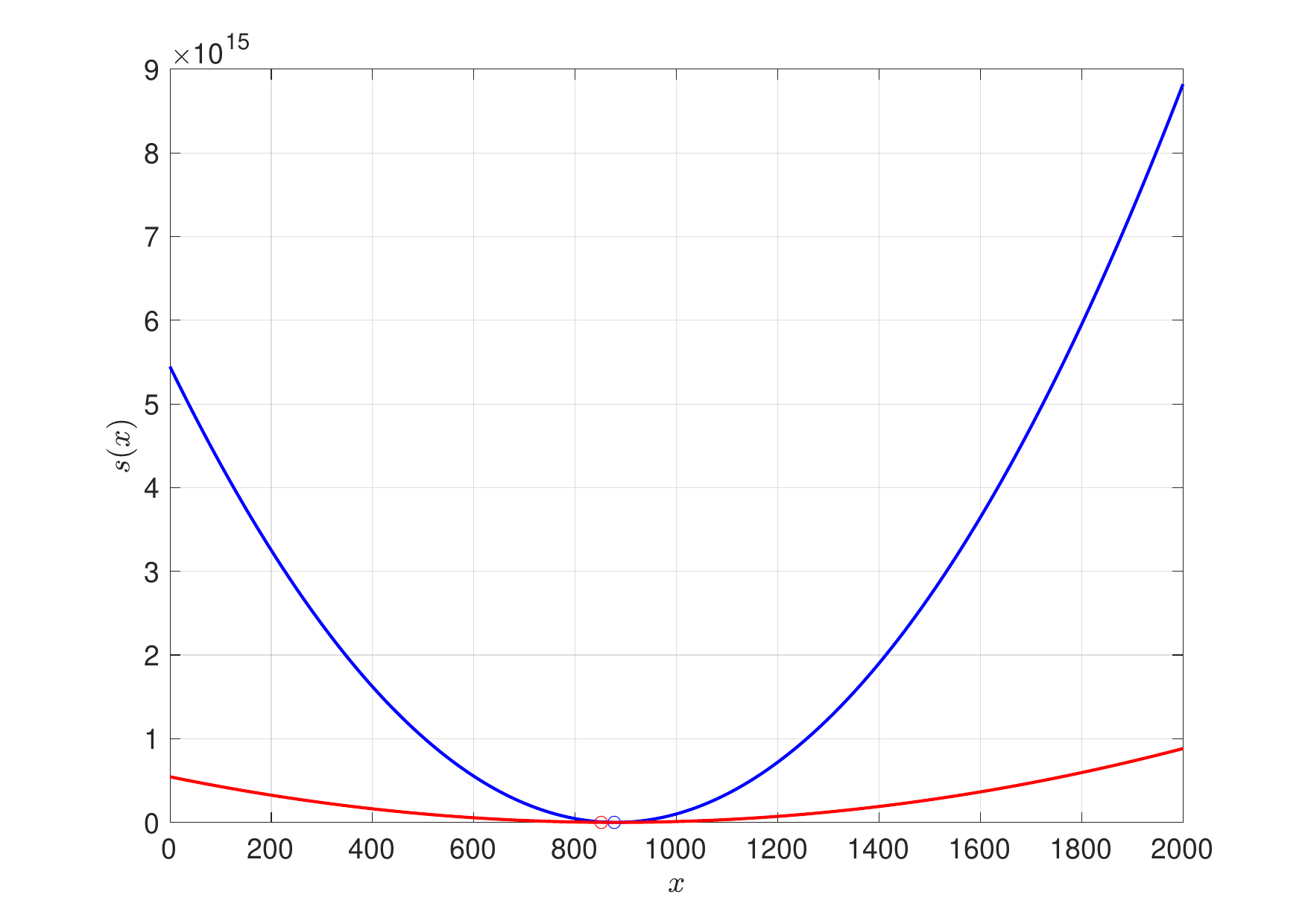}&
\includegraphics[width=0.45\textwidth]{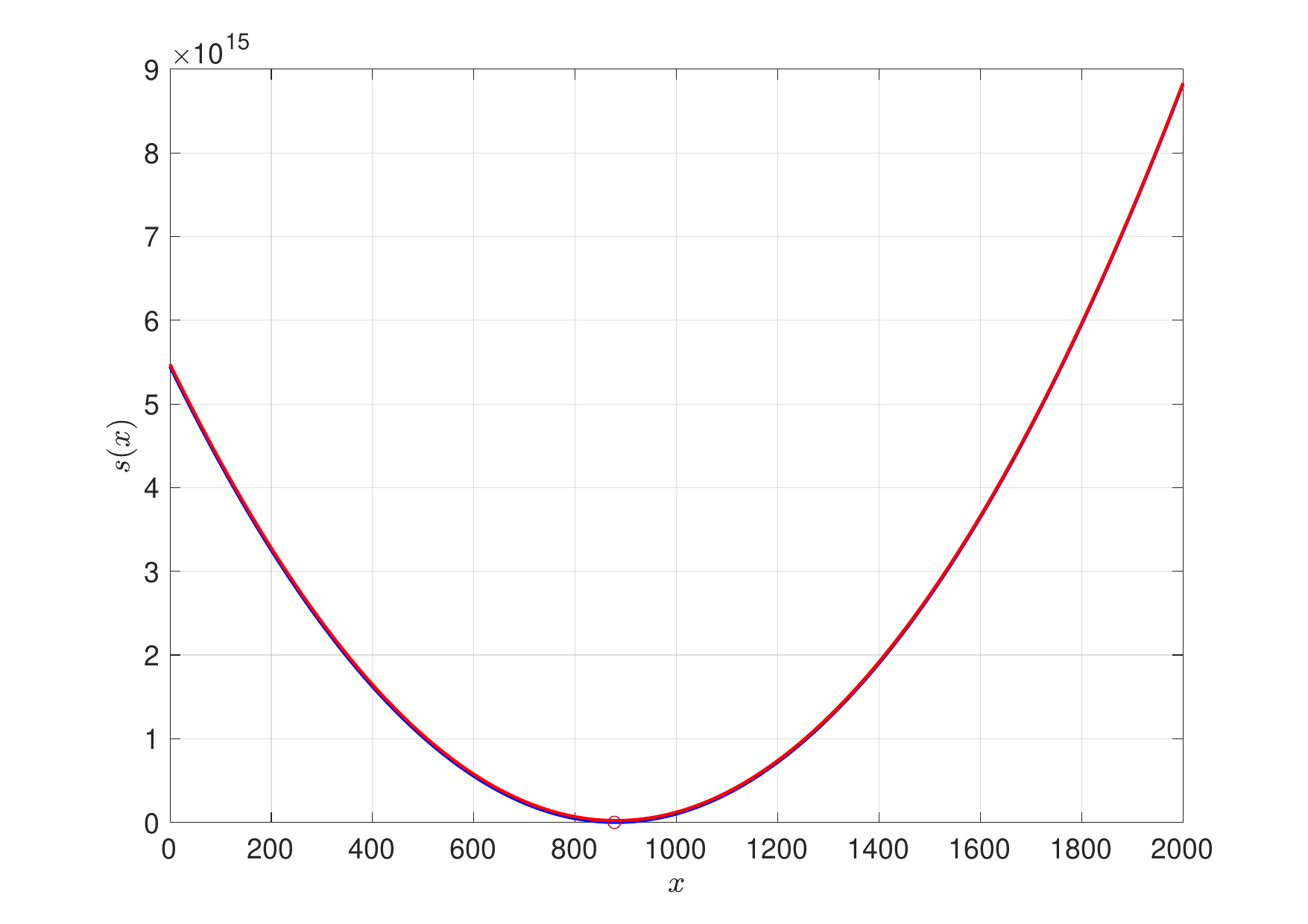}
\end{tabular}
\vspace{-3mm}
\caption{{\small Ratio $s_\pi(P^\ast)/\big(s_\mu(P^\ast) + s_\pi(P^\ast)\big)$ as a function of the penalties $y_\mu$ and $y_\pi$ as a surface plot (a) and as level lines (b).  The social cost function $s(x)$ as a function of the emissions price $x$ for two different values of $y_\mu$ (c) and for two different values of $y_\pi$ (d).}}
\label{fig:costmupi}
\end{center}
\end{figure}

We graphically illustrate the trade-off between inflation concerns and environmental targets by examining how the regulator’s objective cost function varies with changes in the penalization parameters.
For example, increasing $y_\pi$ reflects a scenario where the regulator is less tolerant of deviations from the acceptable inflation target.
For ease of notation, we denote the third and fourth terms of the regulator’s objective cost function as $s_\mu(x) := y_\mu \left( \bar{\mu}_b -  \bar{\phi}\, x +\bar H -\theta  \right)^2$ (costs associated with emissions deviations) and $s_\pi(x) := y_\pi \left(\frac{\bar \omega x}{T\pi}-\nu \right)^2$ (costs associated with inflation deviations).
% Commentind Figure a
Figure~\ref{fig:costmupi}~(a) and (b) plot the ratio $s_\pi(x) / (s_\mu(x)+s_\pi(x)),$ illustrating the relative weight of inflation-related costs compared to the total inflation- and emissions-related costs. Figure~(a) represents the ratio as a surface (a function going from $\mathbb R^2$ to $\mathbb R$) while Figure~(b) represents the level lines of the same function.
We consider variations in the penalization parameters over two orders of magnitude, achieved by multiplying and dividing the reference values $(y_{\cdot}^{\rm ref})$ in Table~\ref{tab:params} by 50, to provide a view of the sensitivity of the regulator’s objective to these parameters.
Considering extreme cases, when the inflation penalization parameter $y_\pi$ is 50 times larger and the emission target penalization parameter $y_\mu$ is 50 times smaller than their reference values, the relative relevance of inflation-related costs becomes 78\% of the total $s_\mu(x) + s_\pi(x)$. 
This indicates that even when inflation concerns are significantly amplified, and the emphasis on achieving environmental targets is relatively low, inflation does not entirely dominate the regulator's decision-making. 
This reflects that price stability cannot be fully prioritized over emissions control, as achieving some level of environmental compliance remains a fundamental aspect of the regulator's objectives.
Conversely, when $y_\mu$ is 50 times larger than its reference value while $y_\pi$ is 50 times smaller, the relative relevance of inflation-related costs drops to 4\%. 
In this case, the regulator’s focus shifts almost entirely toward achieving the emissions reduction target, with minimal weight given to deviations from acceptable inflation levels.
Interestingly, the relative weighting remains virtually unchanged even when the priority of adhering to the environmental target cap decreases. This outcome underscores the centrality of emission-related costs in the regulator's decision-making process, highlighting that, even when political pressures push the regulator to prioritise minimising inflation increases, emissions reduction goals should continue to carry significantly greater weight than concerns over inflation deviations.\footnote{Non-reported results considering alternative theoretical marginal damage costs of 250~\euro/t and 100~\euro/t for emissions exceeding the net-zero target deliver similar results. This finding emphasizes the robustness of emission-related priorities, even in scenarios where the marginal damage cost is lower.} 

\hs

Returning to the critical importance of controlling emissions, Figure~\ref{fig:costmupi}~(c) and (d) offer a comparison of how the regulator’s objective cost function, $s(x)$, changes in response to variations in the permit price $x$, under differing penalty parameters scenarios.
By plotting the cost function under reference values for the penalty parameters associated with emissions deviation and inflation deviation, the blue curves serve as a baseline. 
In Figure~\ref{fig:costmupi}~(c), the sensitivity of the cost function to changes in $y_\mu$ is evident. 
When the value of $y_\mu$ is reduced by an order of magnitude (divided by 10), there is a notable shift in the cost function, as depicted by the red curve. Once again, this substantial variation underscores the importance of $y_\mu$, the penalty parameter for emissions control, in shaping the regulator’s decisions. It reflects that even small adjustments to how deviations from environmental targets are penalized can significantly influence the overall cost dynamics.
Conversely, Figure~\ref{fig:costmupi}~(d) shows the response of the cost function to variations in $y_\pi$. Unlike the previous case, the regulator’s cost function is relatively insensitive to even drastic changes in $y_\pi$. For instance, increasing $y_\pi$ by five orders of magnitude ($10^5$) results in minimal change to the cost function, as illustrated by the near-identical curve. This stability suggests that the policy-driven inflation costs plays a secondary role compared to emissions-related costs in determining the structure of the cost function.

\section{Conclusion}

The primary aim of carbon pricing is to adjust the relative cost of emission-intensive goods, which can lead to an increase in the overall price level. 
In a period of elevated inflation and growing ambitions to mitigate climate change, potentially through broader carbon pricing coverage, these concerns have gained heightened political relevance.
Criticism of carbon pricing has featured prominently in political debates in countries such as Canada and Australia. In Canada, opposition campaigns under slogans like "axe the tax" reflect growing resistance to such measures. Similarly, in Australia, carbon pricing policies have faced significant public and political pushback. 
In Europe, where the future of carbon pricing remains a contentious issue, critics argue that it is punitive and overly costly. Commentators have raised concerns that more ambitious climate policies could provoke backlash akin to the Yellow Jackets protests or broader political resistance, particularly in countries like France and Germany.
These developments have prompted policymakers and academics to closely examine the effects of carbon pricing on inflation and consider how to reshape policy approaches by balancing carbon reduction goals with inflationary pressures to maintain public and political support. The empirical literature on this issue offers mixed evidence, with no clear consensus.
We address this debate by developing a theoretical framework to evaluate the trade-off between achieving ambitious emission reduction targets and managing inflationary pressures induced by carbon pricing. Our model reveals that, even when significant emphasis is placed on maintaining inflation at acceptable levels or when emissions reduction targets are given lower priority, the costs associated with emissions deviations far outweigh any savings from marginally lower inflation. Consequently, emission reduction goals should remain the primary focus for policymakers, underscoring their central role in designing carbon pricing policies.

We consider an economy governed by a cap-and-trade system, where firms comply with emission caps through production adjustments, abatement, and permit trading.
Firms' optimal compliance strategies are shaped by the stringency of the allocation program, which directly impacts the consumer price index and drives inflation.
Changes in production, driven also by firms' choices between abatement and trading, emerge as the primary channel through which carbon pricing influences inflation.
The equilibrium permit price ultimately reflects the regulated supply of permits set by the allocation program.
Using an illustrative example grounded in real data, we calculate the theoretical inflationary impact of achieving net-zero emissions in Europe by 2050. This baseline analysis highlights the price-level implications of ambitious carbon reduction targets, particularly in the absence of explicit considerations for inflationary effects.
We then explore the trade-off between emissions reduction and inflation by introducing a regulator’s objective function that balances environmental and inflationary concerns. 
By varying the weight assigned to deviations from both emission reduction and inflation targets, we illustrate how regulators can effectively manage these competing priorities in practice. 
Our empirical results reveal that, under reasonable model parameterization, even when significant weight is placed on maintaining inflation near acceptable levels or when the emphasis on achieving emissions reduction targets is substantially relaxed, emission-related costs outweigh any savings from marginally lower inflation.
This result has clear policy implications: even with a strong focus on controlling inflation, emission reduction goals must remain paramount, as minor inflationary benefits cannot justify compromising on emission reduction objectives.

\appendix
\section{Appendix}

\subsection{Proof of Theorem \ref{Theo:single-firm-opt}} \label{proof:Theo single-fim-opt}

The firm's program  \ref{eq:objfirm}  is an unconstrained convex optimization,  with smooth objective function.  So, we simply find the FOC by annihilating the Frechet derivatives wrt the controls.  As in  \cite{Aid-Bia}  one gets:
\begin{equation*}
\label{eq:diffJi2}
 \begin{cases}
\frac{1}{\eta_i}\alpha^i_t + h_i + 2\lambda \E_t[X^i_T] =0 \\
-a_i + \kappa_i - \delta_i\tilde q_i + (\delta^i + 2 b^i)q^i_t -2\gamma_i \lambda \E_t[X^i_T]=0\\
P_t  +2\lambda \E_t[X^i_T] =0
\end{cases}
\end{equation*}
From the third equation above, it is apparent that the problem has a solution if and only if  $P$ is a martingale. We assume then that this is the case.       The FOC system gives: 
$$ P_t = -2\lambda \E_t[\hat X^i_T], \ \ \
\hat{\alpha}^i_t = \eta_i(P_t-h_i)  , \text{ and }
\hat{q}^i_t =   \frac{1}{\delta_i + 2 b_i} (-\gamma_i P_t + a_i
-\kappa_i+\delta_i \tilde q_i) = \tilde q_i  - \frac{\gamma_i}{\delta_i + 2 b_i}P_t.
$$
Since the problem lacks strict convexity in $\beta^i$, the above offers an equation only for the cumulative trade variable $B^i_T = \int_0^T\beta^i_t dt$, a component of $X^i_T$, for which we prove uniqueness. To this end, call $B^i_t =\E_t[\hat B_T]$,and focus on   $\E_t[\hat X^i_T]$. 
 \footnote{Note that $B^i_t =E_t[\int_0^T\beta^i_t dt ]\neq \int_0^t\beta _t dt$ in general, i.e the conditional expectation of the total trade is not the cumulated trade up to $t$. }
Then,
\begin{align*}
\E_t[\hat X^i_T] &=  -\gamma_i\E_t \left[\int_0^T \hat q^i_s \, ds\right]+ \E_t\left[\int_0^T \hat\alpha^i_s\, ds\right] + \hat{B}^i_t-\sigma^i W^i_t + M^i_t \\
&= -T \bigg(h_i\eta_i + \gamma_i \tilde q_i \bigg) + \bigg( \eta_i + \psi_i  \bigg) \E_t\left [\int_0^T  P_s \,ds \right]+ \hat{B}^i_t-\sigma^i W^i_t + M^i_t \\
& = -T \bigg(   h_i\eta_i + \gamma_i \tilde q_i \bigg) + \bigg( \eta_i + \psi_i  \bigg) \left( \int_0^t  P_s \,ds  + (T-t)  P_t  \right )+ \hat{B}^i_t-\sigma^i W^i_t + M^i_t
\end{align*}
Substituting $ \E_t[\hat X^i_T]= -\frac{P_t}{2\lambda } $ in the above, and differentiating and rearranging, we get an SDE for $B^i$: 
\begin{align*}
d\hat{B}^i_t = - \left\{ \frac{1}{2\lambda} + \eta_i +\psi_i)(T-t) \right \} dP_t  + \sigma_i dW^i_t -  dM^i_t \\
\hat{B}^i_0 =-\bigg ( \frac{1}{2\lambda} + ( \eta_i + \psi_i  )T\bigg ) P_0 + T \big( h_i\eta_i + \gamma_i\tilde q_i \big)  - M^i_0,
\end{align*}
which has a unique strong solution. So, the firm has to trade in total over $[0,T]$ a number of permits equal to: 
$$ \hat B^i_T = \hat B_0^i + \int_0^T  - \left\{ \frac{1}{2\lambda} + (\eta_i +\psi_i))(T-t) \right \} dP_t  + \sigma_i W^i_T  -  M^i_T ,$$
which can be made more expressive by substituting $M^i_T=A^i_T$, the total allocation to firm $i$. Unfortunately,  an optimal control trade rate  $\hat \beta\in \L^2$   which targets the optimal cumulative number of permits $\hat B^i_T$    may fail to exist for technical reasons. In fact,  finding $\hat \beta$ amounts to represent the above random variable $\hat B_T$ as an integral with respect to the Lebesgue measure $dt$. The paper \cite{Bia-Zit} is entirely dedicated to this problem (see also \cite{bank-soner}). Their results show that a control exists if and only if the quadratic variation of  $\hat B^i$  grows \textit{slowly} when $t\rightarrow T$, or equivalently its volatility goes to zero quickly enough.  The  condition formulated in terms of volatility is precisely: 
$$\E \left[\int_0^T \frac{ (\sigma^{B^i}_t)^2}{T-t} dt   \right ] <\infty ,$$
in which $\sigma^{\hat B^i}$ is read  directly from the SDE for $\hat B^i$ above. This means that $\hat \beta^i$  exists iff the regulator's allocation moderates the volatility of price and emissions so that the exposure of $\hat B^i$   to shocks  is not too wild. If this is the case, then a canonical solution is 
$$\hat \beta^i_t = \hat B_0 +\int_0^t \frac{d\hat B_s}{T-s},$$
as shown in \cite{Bia-Zit}. This concludes the proof of the Theorem.  $\Box$

\subsection{Proof of Theorem \ref{Theo:equilibrium}}\label{proof:Theo equilibrium}
Summing over $i$  the optimal rates from \eqref{eq: opt beta}  and imposing the market clearing condition,  we get the dynamics of $\hat{P}$. The rest easily follows $\Box.$
 
\subsection{Proofs in Section 5}\label{app: Section 5}
 
\begin{Lemma}
\label{lemma: min s}
Let $a>0$ and $g:\R\to\R$ be a convex function. Then the function $s:\R\to\R$
$$ s(x) := ax^2+g(x)$$
has a unique minimizer, which is
\[
P^*:=\inf\{x \, : \ s'_+(x)\geq 0\}\in\R.
\]
\end{Lemma}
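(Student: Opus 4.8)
The plan is to exploit strict convexity of $s$ for uniqueness and the monotonicity and one-sided continuity of derivatives of convex functions both to locate the minimizer and to identify it with the stated formula. First I would record the standard facts about a finite convex function $g$ on all of $\mathbb{R}$: it is continuous, its right derivative $g'_+$ exists everywhere, is nondecreasing and right-continuous, and satisfies $g'_-(x) \le g'_+(x)$ together with $g'_+(y) \le g'_-(x)$ whenever $y < x$. Since $ax^2$ is smooth and strictly convex, $s = ax^2 + g$ is strictly convex, so it has \emph{at most} one minimizer, and its right derivative is $s'_+(x) = 2ax + g'_+(x)$, which inherits nondecreasingness and right-continuity (and likewise $s'_-(x) = 2ax + g'_-(x)$).

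Next I would show that $P^* := \inf A$, with $A := \{x : s'_+(x) \ge 0\}$, is a well-defined real number. Because $g'_+$ is nondecreasing, $g'_+(x) \ge g'_+(0)$ for $x \ge 0$ and $g'_+(x) \le g'_+(0)$ for $x \le 0$; hence $s'_+(x) \ge 2ax + g'_+(0) \to +\infty$ as $x \to +\infty$, so $A$ is nonempty, while $s'_+(x) \le 2ax + g'_+(0) \to -\infty$ as $x \to -\infty$, so $A$ is bounded below. Thus $P^* \in \mathbb{R}$. Since $s'_+$ is nondecreasing, $A$ is an upper ray: if $x \in A$ and $y > x$ then $s'_+(y) \ge s'_+(x) \ge 0$, so $y \in A$.

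Then I would verify that $P^*$ is the global minimizer via the first-order (subdifferential) condition. Right-continuity of $s'_+$ gives $s'_+(P^*) = \lim_{x \downarrow P^*} s'_+(x) \ge 0$, so $P^* \in A$ and $A = [P^*, \infty)$. For $x < P^*$ we have $s'_+(x) < 0$, and passing to the left limit via the identity $s'_-(P^*) = \lim_{x \uparrow P^*} s'_+(x)$ yields $s'_-(P^*) \le 0$. Hence $s'_-(P^*) \le 0 \le s'_+(P^*)$, i.e. $0$ lies in the subdifferential $[\,s'_-(P^*), s'_+(P^*)\,]$, which is precisely the condition characterizing a global minimizer of a convex function (equivalently, the sign pattern shows $s$ is strictly decreasing on $(-\infty, P^*]$ and strictly increasing on $[P^*, \infty)$). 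Uniqueness then follows from strict convexity.

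The main obstacle is purely the non-smoothness of $g$: every step must be phrased through the one-sided derivatives $s'_\pm$ rather than an ordinary derivative, and the delicate points are justifying $s'_+(P^*) \ge 0$ from right-continuity and $s'_-(P^*) \le 0$ from the left-limit identity $s'_-(x_0) = \lim_{x \uparrow x_0} s'_+(x)$. Once these two one-sided inequalities at $P^*$ are secured, identifying the point defined by the formula with the unique minimizer is immediate.
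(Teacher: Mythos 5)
Your proof is correct, but it takes a genuinely different route from the paper's, so a comparison is worthwhile. The paper argues existence by coercivity: convexity of $g$ yields an affine minorant $g(x)\geq mx+g(0)$, hence $s(x)\geq ax^{2}+mx+g(0)\to+\infty$ as $x\to\pm\infty$, and a continuous, strictly convex, coercive function on $\mathbb{R}$ attains a unique minimum; the identification of that minimizer with $P^{*}$ is then essentially asserted, with only the remark that $s'_+(x)=2ax+g'_+(x)$ is well defined and nondecreasing. You dispense with coercivity altogether and work through the first-order condition: finiteness of $P^{*}$ follows from the derivative asymptotics $s'_+(x)\geq 2ax+g'_+(0)\to+\infty$ for $x\geq 0$ and $s'_+(x)\leq 2ax+g'_+(0)\to-\infty$ for $x\leq 0$; then right-continuity of $s'_+$ gives $s'_+(P^{*})\geq 0$, and the left-limit identity $s'_-(P^{*})=\lim_{x\uparrow P^{*}}s'_+(x)$ combined with $s'_+(x)<0$ for $x<P^{*}$ gives $s'_-(P^{*})\leq 0$, so that $0\in\left[s'_-(P^{*}),\,s'_+(P^{*})\right]=\partial s(P^{*})$ and $P^{*}$ is a global minimizer, unique by strict convexity. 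What each approach buys: the paper's Weierstrass-style argument is shorter, but it leaves implicit the one genuinely delicate point of the lemma --- why the infimum defining $P^{*}$ is actually attained and is the minimizer, which hinges precisely on the one-sided continuity properties of $s'_+$ at $P^{*}$; your subdifferential argument proves existence and identification in a single stroke and supplies exactly the details the paper glosses over (right-continuity of $g'_+$, the identity $s'_-(x_0)=\lim_{x\uparrow x_0}s'_+(x)$, and that $A=\{x:s'_+(x)\geq 0\}$ is the closed ray $[P^{*},\infty)$), at the cost of invoking somewhat more machinery on one-sided derivatives of convex functions. Both proofs use strict convexity identically for uniqueness, and your version can fairly be read as a completion of the paper's terse final sentence rather than a mere variant of its coercivity step.
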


\begin{proof}
 By convexity of $g$,  there exists $m\in\R$ such that $g(x)\geq mx + g(0)$.
As a consequence,  $s$ is coercive: $\lim_{x\to\pm\infty}s(x)=+\infty$.  Being continuous, strictly convex, and coercive, $s$ has a unique minimum. Also, the right derivative $s'_+(x)=2ax+g'_+(x)$ is well- defined and non-decreasing since $s$ is convex on $\mathbb R $.  Therefore  $P^*$  above is the unique minimizer.$\square$
\end{proof}
\\

\noindent \textbf{Proof of Theorem \ref{Theo:ell-phi}}\\

\noindent For any allocation $\bA$ with corresponding equilibrium price $\hat P$ and optimal controls at equilibrium $(\hat \alpha,\hat q,\hat \beta)$,  we first isolate in the regulator's objective function a term $I^R(\mathbf{A})$ which is the sum of firms optimal cost, as per  \ref{Theo:equilibrium} and \ref{Theo:single-firm-opt}:
\begin{align*}
I^R(\bA)&= \sum_{i=1}^N \mathbb E \bigg[\int_0^T \left \{-(a_i-b_i\hat q^i_t) \hat q^i_t  +  k_i (\hat q^i_t- \tilde q_i) + \frac{\delta_i}{2} (\hat q^i_t- \tilde q_i)^2 + h_i \hat \alpha^i_t + \frac{1}{2\eta_i} (\hat{\alpha}^i_t)^2\right \} dt + \lambda (\hat X^i_T)^2 \bigg ] \\
& = \sum_{i=1}^N \bigg[\int_0^T\bigg( \kappa_i  \tilde q_i + \frac{1}{2}\delta_i ( \tilde q_i)^2 - (a_i-\kappa_i+\delta_i  \tilde q_i)\E[\hat{q}^i_t] +\bigg(b_i+\frac{\delta_i}{2}\bigg)\E[(\hat{q}^i_t)^2] +\\ 
& + h_i \E[\hat{\alpha}^i_t]+\frac{1}{2\eta_i}\E[(\hat{\alpha}^i_t)^2] \bigg) dt \bigg]  +  \frac{N}{4\lambda} \E[\hat{P}^2_T]  \\
&  = \hat P_0 T  \sum_{i=1}^N (a_i-k_i -2b_i\tilde q_i)\frac{\psi_i}{\gamma_i}+ \E\left [\int_0^T \hat P^2_t dt\right ]  \sum_{i=1}^N \Big[\Big(b_i + \frac{\delta_i}{2}\Big) \frac{\psi_i^2}{\gamma_i^2} +\eta_i\Big] + \frac{N}{4\lambda} \E[\hat{P}^2_T] + \textit{ const}\\
& = \hat P_0^2   \frac{NT}{2} \left( \bar{\phi} +\frac{1}{2\lambda T}  \right)  + N  \bar{\phi}\, \E\left [\int_0^T \langle \hat P\rangle _t dt\right ] + \frac{N}{4\lambda}  \E\left [\langle \hat P\rangle_T\right ]  + \textit{const}
\end{align*}
In the first equality, market clearing  allows us to get rid of the trade terms.  The second  and third equalities hold thanks to a combination of   \eqref{eq: opt alpha} and \eqref{eq: opt q} , together with the martingality property of $\hat{P}$, which entails $\E[\hat{P}_t]=\hat P_0$ and $\E[\hat P_t^2]=\hat P_0^2+\E[\langle \hat P\rangle_t]$ and the definition of $\tilde q_i$.  Then, the regulator goal function is:
$$ J^R(\bA) = I^R(\bA)  
 + \E[\ell( \bar{\mu}_b-\bar{\phi} \hat{P}_T+\bar H-\theta)] + \E\bigg[    \varphi \left(  \frac{\bar \omega \hat P_T}{T \pi}-\nu\right )\bigg]  $$
By Jensen's inequality and martingality of $\hat P$,   we minorize $J^R(\bA)$ by  passing the expectation sign inside the convex functions  $\ell,\varphi$ :
\begin{align*}
J^R(\bA)&\geq \hat P_0^2   N T \left(\frac{\bar \psi}{2} +\bar \eta +\frac{1}{4\lambda T}  \right)  + N  \left (\frac{\bar \psi}{2} +\bar \eta\right )\E\left [\int_0^T \langle \hat P\rangle _t\, dt\right ] + \frac{N}{4\lambda}  \E\left [\langle \hat P\rangle_T\right ]  +  \\
&   \ell\left( \bar{\mu}_b-\bar{\phi}\hat{P}_0 + \hat H-\theta \right)
+     \varphi \left(  \frac{\bar \omega \hat P_0}{T \pi}-\nu\right )  +  \textit{ const}
\end{align*}
In the right hand side now, the minimization of the quadratic variation  of $\hat{P}$ and its initial $\hat P_0$ are disentangled.  It is easy to see then that the rhs minimizer $P^*$ is unique.   It is a constant martingale, with   zero quadratic variation, and $P^*_0$    is the unique minimum (see Lemma \ref{lemma: min s}) of  the real function
\begin{align*} 
s(x) & = x^2   \frac{NT}{2} \left( \bar{\phi} +\frac{1}{2\lambda T}  \right)    + \ell\left( \bar{\mu}_b-\bar{\phi}\, x + \bar H-\theta \right)
+     \varphi \left(  \frac{\bar \omega x}{T \pi}-\nu\right ) 
\end{align*}
The optimal  firm controls follow   from Theorem \ref{Theo:equilibrium} using the above $P^*$ as equilibrium price.   
Any optimal regulation $\mathbf{A}^*$ must annihilate the volatility of the prices, so $$d\bar M^* = d\bar W,$$
The optimal average allocation $\bar M^*_0 = \E[\bar A_T] = \frac{1}{N}\E[\sum_i A^{i,*}_T]$ is found from \eqref{eq: dynamics P equilibrium}:  $$ \bar M^*_0  =(\bar H + \bar{\phi})T - {P^*} (\frac{1}{2\lambda} +\bar{\phi} T).$$
So, optimal allocations are   unique only in conditional expectation. {An example of optimal allocation is: 
$$A^{i,*}_t  = \bar M^*_0 + \sigma_i W^i_t,$$
namely the regulator gives permits in the beginning, the same number to everyone, and then neutralise the individual shocks. $\hfill\square$

\clearpage
\bibliographystyle{chicago}
\bibliography{bibliography.bib}
%%%%%%%%%%%%%%%%%%%%%%%%%%%%%%%%%%%%%%%%%%%%%%%%%

\end{document}